\documentclass{article}

\usepackage{graphicx}
\usepackage{dcolumn}
\usepackage{bm}

\usepackage[utf8]{inputenc}
\usepackage[T1]{fontenc}
\usepackage{mathptmx}
\usepackage{amsthm}
\usepackage{tikz}
\usetikzlibrary{matrix}
\usepackage{bbold}
\usepackage{tabularx}
\usepackage{amsmath}
\usepackage{amssymb}

\newtheorem{theorem}{Theorem}
\newtheorem{proposizione}{Proposition}
\newtheorem{corollario}{Corollary}
\newtheorem{lemma}{Lemma}
\newtheorem{definizione}{Definition}[section]
\theoremstyle{remark}
\newtheorem{remark}{Remark}
\newtheorem{example}{Example}

\newcommand{\bigslant}[2]{{\raisebox{.5em}{$#1$}\left/\raisebox{-.5em}{$#2$}\right.}}

\begin{document}
		
\title{Some Remarks on the Operator Formalism\\for Nonlocal Poisson Brackets\thanks{This article has been submitted to the \textit{Journal of Mathematical Physics}}} 

\date{\today}
\author{Riccardo Ontani}

\maketitle 

\begin{abstract}
	A common approach to the theory of nonlocal Poisson brackets, seen from the operatorial point of view, has been to keep implicit the sets on which these brackets act. In this paper we aim to explicitly define appropriate functional spaces underlying to the theory of 1 codimensional weakly nonlocal Poisson brackets, motivating the definitions, and to prove the validity in this context of some classical results in the field. We start by introducing the spaces for the local case, which will serve as building tools for those in the nonlocal one. The precise definition and the study of these nonlocal functionals are the core of this work; in particular we work out a characterization of the variational derivative of such objects. We then translate everything to the level of manifolds, defining a global version of the functionals, and introduce the notion nonlocal Poisson brackets in this context. We conclude by applying all the machinery to prove a theorem due to Ferapontov. This last application is the natural conclusion of our discussion and shows that the spaces we introduce are suitable objects to work with when studying topics in this theory.
\end{abstract}
\section*{Introduction}
The theory of Poisson brackets over functional spaces has its roots in the work (Ref.~\cite{DubNov}) of B.A. Dubrovin and S.P. Novikov, in which they studied the conditions for local brackets to be skew symmetric and to satisfy the Jacobi identity. A fundamental result of their work was understanding that these conditions have a differential geometric nature: such Poisson brackets acting on local functionals over a manifold $M$ are related to pseudo-Riemannian structures on the manifold. In (Ref.~\cite{Ferapontov}), E.V. Ferapontov studied the same conditions for weakly nonlocal Poisson brackets (whose name comes from the work (Ref.~\cite{MaltNov}) of A.Ya. Maltsev and S.P. Novikov), finding an even richer bond with Riemannian geometry, that involves a link between the theory of these brackets and the theory of hypersurfaces of Euclidean spaces. In this context, computations were brought on without focusing much on specifying the functional spaces on which the theory was rooted, but with the intention of highlighting the links with differential geometry and the applications to mathematical physics. In the work (Ref.~\cite{MaltNov}) of A.Ya. Maltsev and S.P. Novikov a choice for such spaces is made. Here we aim to show that such definition is in fact well posed and we specify the spaces of functions these operators act on.
It's definitely worth mentioning that there are alternative solutions to the problem of building a formal environment around the theory of these brackets. The most influential one comes from the work, completely based on abstract algebra, of A. De Sole and V.G. Kac (for example see (Ref.~\cite{KacDeSole})). For different computational techniques that can be used to prove the theorem of Ferapontov in this setting, two references are (Ref.~\cite{Lorenzoni}) and (Ref.~\cite{Lorenzoni2}). 
The aim of the present work is to show that it's not \emph{necessary} to pass through this more abstract algebraic formalism in order to study these objects from a rigorous point of view.

\section{FIRST DEFINITIONS: LOCAL FUNCTIONALS}

We first introduce a class of functionals that play a role in the theory of local Poisson brackets \cite{DubNov}. 
\begin{definizione}
	We will denote with $\mathcal{S}_n$ the linear space of functions from $\mathbb{R}$ to $\mathbb{R}^n$ whose components are Schwartz functions.
\end{definizione}
Notice that we have an obvious identification $\mathcal{S}_n \simeq \prod^n \mathcal{S}_1$.
This defines a Fréchet space structure on $\mathcal{S}_n$ through the product metric $\prod^n \mathcal{S}_1$ induced by the usual Fréchet metric on $\mathcal{S}_1$.
 We'll consider the linear integral operator $I : \mathcal{S}_1 \rightarrow \mathbb{R}$ defined as $I[h] := \int_\mathbb{R} h(x) dx$. It is is clearly well defined and bounded.
\begin{definizione}
Let $n, N$ be natural numbers. A function $f : \mathcal{S}_n \rightarrow \mathcal{S}_1$ is said to be a \emph{local density of order N} (or \emph{N-local}) if there exists 
$\phi \in C^\infty(\mathbb{R}^{1 + n \cdot (N+1)})$ such that
\begin{itemize}
\item 
$f(u)(x) := \phi(x, u(x), u^{(1)}(x), ..., u^{(N)}(x)) \quad \quad \forall x \in \mathbb{R}$, $u \in \mathcal{S}_n$.
\item Given a bounded subset $B \subset \mathbb{R}^{n (N+1)}$, for each partial derivative $\psi$ (of any order) of $\phi$ we have
\begin{align*}
\sup_{(x, y) \in {\mathbb{R} \times B}} \left \vert \psi(x, y) \right \vert < +\infty
\end{align*} 
\end{itemize}
For every $i \in \lbrace 0 , ..., N \rbrace$, $j \in \lbrace 1, ..., n \rbrace$ we define ${\partial f}/{\partial u^{(i)}_j} : \mathcal{S}_n \rightarrow C_b^\infty(\mathbb{R}, \mathbb{R}^n)$ such that:
\begin{align*}
&\dfrac{\partial f}{\partial u^{(i)}_j}(v)(x) := \dfrac{\partial \phi}{\partial u^{(i)}_j}(x, v(x), v^\prime(x), ..., v^{(N)}(x))
\end{align*} 
An analogous definition is given for derivatives of higher order.
A functional $F : \mathcal{S}_n \rightarrow \mathbb{R} $ is said to be \emph{N-local} if $F = I \circ f$ where $f$ is an N-local density.
In this case we will write $F \in \mathcal{L}_n$.
\end{definizione}

For local functionals, the Gateaux differential exists and takes a particular well known form:
\begin{theorem}[Euler-Lagrange formula]
Pick $v \in  \mathcal{S}_n$, $F \in \mathcal{L}_n$ and let $f$ be the N-local density associated to F.
Consider the $C^{\infty}(\mathbb{R}, \mathbb{R}^n)$ function defined by
\begin{equation} \label{EL}
\dfrac{\delta F}{\delta v_j}(x) :=
(- \dfrac{d}{d x})^i 
(\dfrac{\partial f}{\partial u_j^{(i)}})(v)(x)
\end{equation}
called \emph{variational derivative} of F in u. F is G-differentiable and
\begin{equation}
\label{eq:1}
d_G F(u)[h] = \int_\mathbb{R} \frac{\delta F}{\delta u}(x) h(x) dx
\end{equation}
holds for each $h \in \mathcal{S}_n$. 
\end{theorem}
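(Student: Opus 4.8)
The plan is to reduce the statement to two classical manipulations, differentiation under the integral sign followed by integration by parts, with the technical hypotheses on $\phi$ serving precisely to legitimize the first of these.

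First I would write out the object to be differentiated. By definition $F(u) = \int_\mathbb{R} \phi(x, u(x), \ldots, u^{(N)}(x))\, dx$, so the Gateaux differential is the $t$-derivative at $t=0$ of
\[
t \mapsto F(u+th) = \int_\mathbb{R} \phi\big(x, (u+th)(x), \ldots, (u+th)^{(N)}(x)\big)\, dx.
\]
To compute it I would form the difference quotient $\big(F(u+th)-F(u)\big)/t$ and apply the mean value theorem to the integrand pointwise in $x$: for each fixed $x$ and each small $t$ there is a point $\xi(x,t)$ on the segment joining $(u(x),\ldots,u^{(N)}(x))$ to its $t$-perturbation such that the integrand's difference quotient equals $\sum_{i,j} (\partial\phi/\partial u^{(i)}_j)(x,\xi(x,t))\, h_j^{(i)}(x)$.

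The crux, and the step I expect to be the main obstacle, is justifying the exchange of limit and integral; here the two defining properties of an $N$-local density combine. Since $u$ and $h$ together with all their derivatives are Schwartz, hence bounded, the perturbed arguments $\xi(x,t)$ all lie in a single bounded set $B \subset \mathbb{R}^{n(N+1)}$ as $x$ ranges over $\mathbb{R}$ and $t$ over, say, $[-1,1]$. The uniform bound on the partial derivatives of $\phi$ over $\mathbb{R}\times B$ then furnishes a constant $C$ with $\vert(\partial\phi/\partial u^{(i)}_j)(x,\xi(x,t))\vert \le C$ for all such $x,t$, so the difference quotients are dominated in absolute value by $C\sum_{i,j}\vert h_j^{(i)}(x)\vert$, a Schwartz and hence integrable function independent of $t$. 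Dominated convergence then lets me pass to the limit $t\to 0$ inside the integral, yielding
\[
d_G F(u)[h] = \int_\mathbb{R} \sum_{i,j} \dfrac{\partial f}{\partial u^{(i)}_j}(u)(x)\, h_j^{(i)}(x)\, dx.
\]

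It remains to move the derivatives off of $h$. I would integrate each summand by parts $i$ times; the boundary terms vanish because $h$ and all its derivatives decay at $\pm\infty$, and each integration contributes a sign, so after $i$ steps the factor $h_j^{(i)}$ is replaced by $h_j$ and $(\partial f/\partial u^{(i)}_j)(u)$ by $(-d/dx)^i (\partial f/\partial u^{(i)}_j)(u)$. Throughout, integrability of the products is guaranteed because every $x$-derivative of $(\partial f/\partial u^{(i)}_j)(u)$ is again bounded, being a partial derivative of $\phi$ evaluated along $u$ times bounded derivatives of $u$, so its product with a Schwartz function is integrable. Summing over $i$ and $j$ recognizes the bracketed expression as $\delta F/\delta u$ from \eqref{EL}, which establishes \eqref{eq:1}.
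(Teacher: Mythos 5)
Your argument is correct, and it is essentially the approach the paper itself takes: the paper omits a printed proof of this theorem, but its proof of the analogous differentiability lemma for $\tilde{\mathcal{W}}^1_n$ uses exactly your strategy --- expand the difference quotient via Taylor/mean value, confine the intermediate arguments to the bounded set $\prod_i \left( \mathrm{Im}(u^{(i)}) + B_0(\Vert h \Vert_\infty) \right)$, invoke the boundedness hypothesis on the partial derivatives of $\phi$ to dominate by a Schwartz function, and pass to the limit before integrating by parts. Nothing further is needed.
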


Notice that in formula (\ref{EL}), as in the rest of the paper, the Einstein summation convention is used. Explicit sums will be written only in particular situations where confusion is possible.
\begin{example}
	Consider the functional $H \in \mathcal{L}_1$ defined by the composition of $I$ with $h:\mathcal{S}_1 \rightarrow \mathcal{S}_1$ given by $h(u) := u^3 - \frac{1}{2}u u_{xx}$.
	We have that $h$ is a 2-local density:
	Define $\phi \in C^\infty \left(\mathbb{R}^4 \right)$ by $\phi(x, y_0, y_1, y_2):= y_0^3 - \frac{1}{2}y_0 y_2$.
	Then $h(u)(x) = \phi(x, u(x), u_x(x), u_{xx}(x))$, and
	every partial derivative $\psi$ of $\phi$ is of the form
	$\psi(x,y_0, y_1, y_2) = \tilde{\psi}(y_0, y_1, y_2)$ for some smooth $\tilde{\psi}$.
 	Then if $B \subset \mathbb{R}^3$ is bounded we have
	\begin{align*}
		\sup_{(x,y) \in \mathbb{R}\times B} \vert\psi(x,y) \vert = \sup_{y \in B} \vert\tilde{\psi}(y) \vert < \infty
	\end{align*}
	being $\tilde{\psi}$ continuous and $B$ bounded. So the partial derivatives are bounded on $\mathbb{R}\times B$ by constants. This local functional is the Hamiltonian for the KdV equation w.r.t. the Gardner-Zakharov-Faddeev bracket \cite{Gardner}.
\end{example}

\section{ADDING NONLOCALITY:\\ WEAKLY NONLOCAL FUNCTIONALS}
Following the approach of A.Ya. Maltsev and S.P. Novikov in (Ref.~\cite{MaltNov}), we introduce the concept of weakly non local operators. The functionals appearing from now on are examples of pseudo-differental operators. A standard reference to the literature concerning the their theory is the book (Ref ~\cite{Hormander}) of L. H\"{o}rmander. In this section we aim to define the smallest extension of the class of local functionals that is closed under the action of the weakly nonlocal Poisson brackets, which will be defined later in section \ref{Sect3}.
In order to introduce this class of functionals, we consider the linear operator
$d^{-1} : \mathcal{S}_1 \rightarrow C_b^{\infty}(\mathbb{R})$ defined by
\begin{equation}
d^{-1}(f)(x) := \dfrac{1}{2} \left(\int_{-\infty}^x f(z)dz  - \int_{x}^{+\infty} f(z)dz \right)
\end{equation} 
This operator is well defined by convergence of the integrals, due to the basic properties of Schwartz functions. Let's highlight three properties of this object:
\begin{itemize}
\item Let $f \in \mathcal{S}_1$. Then $d^{-1}(f)$ is an antiderivative of $f$. More precisely, it's the antiderivative that at $-\infty$ tends to $-\frac{1}{2}\int_\mathbb{R} f dx$.
\item Let $f, g \in \mathcal{S}_1$. Then $\xi := d^{-1}(f) d^{-1}(g)$ is such that $\xi^\prime = d^{-1}(f) g + f d^{-1}(g)$ and 
\begin{align}\label{secondProperty}
\lim_{x \rightarrow +\infty} \xi(x) - \lim_{y \rightarrow -\infty} \xi(y) = 0
\end{align}
\item Let $f, g \in \mathcal{S}_1$. Then $f \cdot d^{-1}(g) \in \mathcal{S}_1$.
\end{itemize}
The third property allows us to give the following definition.
\begin{definizione}
Consider the linear subspaces of the set of functions $\mathcal{S}_n \rightarrow \mathcal{S}_1$ defined inductively by $\mathcal{D}^0_n := \lbrace\text{local densities } \mathcal{S}_n \rightarrow \mathcal{S}_1\rbrace$ and $\mathcal{D}^m_n := span_\mathbb{R} \tilde{\mathcal{D}}^m_n$ where
\begin{align*}
\tilde{\mathcal{D}}^m_n := \left\lbrace g \prod_{\alpha = 1}^A d^{-1}(h_\alpha) \quad \quad g, h_\alpha \in \mathcal{D}_n^{m-1} \quad  A \in \mathbb{N} \right\rbrace
\end{align*}
for each $m>0$. Let $\mathcal{D}_n := \bigcup_{m \in \mathbb{N}}\mathcal{D}_n^m$. We call \emph{weakly nonlocal} (\emph{WNL}) \emph{functional over $\mathbb{R}^n$} every functional of the form $I \circ f$ where $f \in \mathcal{D}_n$. We will write:
\begin{align*}
\tilde{\mathcal{W}}^m_n := I \circ \tilde{\mathcal{D}}^m_n \quad ; \quad \mathcal{W}^m_n := I \circ \mathcal{D}^m_n \quad ; \quad \mathcal{W}_n := I \circ \mathcal{D}_n
\end{align*}
\end{definizione}
\begin{remark}
In the definition above, the case $A=0$ is not excluded, hence we have $\mathcal{D}^i_n \subset \mathcal{D}^j_n$ whenever $i < j$.
\end{remark}
\begin{remark}
For a functional $F \in \tilde{\mathcal{W}}_n^m$ we can find an explicit representative for its density: it will be of the form
\begin{align}
\label{rep}
g \prod_{\alpha_1=1}^A d^{-1} 
\left( 
... 
\left(
h_{\alpha_1, ...,\alpha_m} \prod_{\alpha_{m+1}=1}^{A_{\alpha_1, ..., \alpha_{m}}} d^{-1} 
\left(
h_{\alpha_1, ..., \alpha_{m+1}} 
\right)
\right)
... 
\right)
\end{align}
where $g$ and all the $h$'s are local densities and some of the $A$'s can be zero.
\end{remark}

\begin{remark} \label{noproduct}
Notice that $\mathcal{D}_n$ has a natural structure of $\mathbb{R}$-algebra w.r.t. the obvious linear structure and the pointwise product $(f \cdot g )(u) := f(u) \cdot g(u)$.
On the other hand, it's important to remark that the space of weakly nonlocal functionals doesn't have the structure of an $\mathbb{R}$-algebra, meaning that there is no reasonable (in our context) way to define a product of WNL functionals. Let's analyze some examples:
\begin{itemize}
    \item Let's define our "product" of $F,G \in \mathcal{W}_n$ pointwise by setting $F \cdot G[u] := F[u] \cdot G[u]$. The reason why this definition doesn't work is that in general this is not an integral functional anymore, as there is no way to express it as the integral of a density.
    
    \item One may be tempted to define the product by taking the product of densities: given $F,G \in \mathcal{W}_n$ having $f, g$ as WNL densities , we could define $F \cdot G[u] := \int_\mathbb{R} f \cdot g(u) dx$. Unfortunately, this approach has its problems too:
    the space of WNL functionals can be realized as the quotient space of the space $\mathcal{D}_n$ of WNL densities by subspace $d \mathcal{D}_n$. This means that, in order to be defined over functionals, our product has to pass to equivalence at the level of densities. But in general
    \begin{equation*}
        (f + dh)\cdot(g + dk) = f \cdot g + f \cdot dk + dh \cdot g + dh \cdot dg \nsim f \cdot g
    \end{equation*}
\end{itemize}
\end{remark}
We want to extend the formula for the variational derivative to these new functionals. First of all, we consider the simplest nonlocal case: the one of $\mathcal{W}^1_n$. Let's remark that the following version of the Leibniz rule holds as a consequence of the Taylor formula.
\begin{lemma}
Consider $F \in \tilde{\mathcal{W}}^1_n$. It's is G-differentiable and $\forall k \in \mathcal{S}_n$
\begin{align*}
&d_G F (u)[k] = \int_\mathbb{R} \dfrac{\partial g}{\partial u^{(i)}}(u) \cdot k^{(i)} \prod_{\alpha = 1}^A (d^{-1} h_\alpha (u)) dx \\ 
&+ \sum_{\alpha =1}^A \int_\mathbb{R} g(u) d^{-1} \left( \dfrac{\partial h_{\alpha}}{\partial u^{(i)}}(u)\cdot k^{(i)}\right) \prod_{\beta \neq \alpha}(d^{-1} h_\beta(u)) dx
\end{align*}
\end{lemma}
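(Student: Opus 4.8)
The plan is to reduce the statement to a differentiation under the integral sign. Writing $F = I\circ\bigl(g\prod_{\alpha=1}^A d^{-1}(h_\alpha)\bigr)$ with $g,h_\alpha$ local densities, the Gateaux differential is by definition $d_G F(u)[k]=\frac{d}{dt}\big|_{t=0}F[u+tk]$, and
\[
F[u+tk]=\int_\mathbb{R}\Phi(t,x)\,dx,\qquad \Phi(t,x):=g(u+tk)(x)\prod_{\alpha=1}^A d^{-1}(h_\alpha(u+tk))(x).
\]
First I would establish the pointwise $t$-derivative of $\Phi$, then produce a $t$-uniform integrable majorant for $\partial_t\Phi$, and finally invoke the Leibniz rule for differentiation under the integral (via dominated convergence) to exchange $\frac{d}{dt}$ and $\int_\mathbb{R}$.

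For the pointwise derivative, since $(u+tk)^{(i)}=u^{(i)}+tk^{(i)}$, the Taylor/chain rule applied to the smooth generator $\phi$ of the local density $g$ gives $\partial_t[g(u+tk)(x)]=\frac{\partial g}{\partial u^{(i)}}(u+tk)(x)\,k^{(i)}(x)$, and likewise for each $h_\alpha$. Because $d^{-1}$ is the linear integral operator $d^{-1}(f)(x)=\frac12\bigl(\int_{-\infty}^x f-\int_x^{+\infty}f\bigr)$, differentiating $t\mapsto d^{-1}(h_\alpha(u+tk))(x)$ amounts to a second, inner, differentiation under the integral; the pointwise derivative of the integrand is $\frac{\partial h_\alpha}{\partial u^{(i)}}(u+tk)(z)k^{(i)}(z)$, which is dominated in $z$ by a fixed $L^1$ function, so that $\partial_t[d^{-1}(h_\alpha(u+tk))](x)=d^{-1}\bigl(\frac{\partial h_\alpha}{\partial u^{(i)}}(u+tk)\,k^{(i)}\bigr)(x)$. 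Applying the ordinary Leibniz rule to the finite product and setting $t=0$ then reproduces exactly the two groups of terms in the claimed formula.

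The heart of the argument, and the step I expect to be the main obstacle, is the construction of a $t$-uniform integrable majorant for $\partial_t\Phi$ on a compact interval, say $t\in[-1,1]$. Here I would use two facts. First, for $x\in\mathbb{R}$ and $t\in[-1,1]$ the arguments $(u+tk)^{(i)}(x)$ stay in a fixed bounded set $B\subset\mathbb{R}^{n(N+1)}$ (the $u^{(i)},k^{(i)}$ being Schwartz, hence bounded), so the boundedness condition built into the definition of local density makes $\frac{\partial g}{\partial u^{(i)}}(u+tk)$ and $\frac{\partial h_\alpha}{\partial u^{(i)}}(u+tk)$ bounded on $\mathbb{R}$ by constants independent of $t$. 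Second, from $|d^{-1}(f)(x)|\le\frac12\|f\|_{L^1}$ together with the estimate $|h_\alpha(u+tk)(x)|\le|h_\alpha(u)(x)|+C_\alpha\sum_i|k^{(i)}(x)|$, obtained by integrating $\partial_s h_\alpha(u+sk)$ over $s$ and using the uniform bound on its partials, the factors $d^{-1}(h_\alpha(u+tk))(x)$ are bounded uniformly in $t$ by $\frac12$ times a fixed $L^1$ norm. Combining these, the first group of terms is dominated by $(\text{const})\times\sum_i|k^{(i)}(x)|\times(\text{const})$, which is integrable thanks to the Schwartz decay of $k$; the second group, by the same reasoning together with the analogous bound $|g(u+tk)(x)|\le|g(u)(x)|+C\sum_i|k^{(i)}(x)|$, is dominated by an $L^1$ function whose decay now comes from $g(u)\in\mathcal{S}_1$ and $k\in\mathcal{S}_n$.

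Once a majorant $G\in L^1(\mathbb{R})$ with $|\partial_t\Phi(t,x)|\le G(x)$ for all $t\in[-1,1]$ is in hand, the standard differentiation-under-the-integral theorem (difference quotients rewritten via the mean value theorem and controlled by $G$, then dominated convergence) yields that $t\mapsto F[u+tk]$ is differentiable at $0$ with $d_G F(u)[k]=\int_\mathbb{R}\partial_t\Phi(0,x)\,dx$, which is the asserted identity. The only genuinely delicate point is the uniform-in-$t$ control of the nonlocal factors $d^{-1}(h_\alpha(u+tk))$: it is precisely there that the boundedness axiom in the definition of local density and the $L^1$-contraction property of $d^{-1}$ are indispensable, everything else reducing to the chain and product rules.
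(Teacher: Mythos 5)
Your argument is correct, and it reaches the formula by a genuinely different (though closely related) route from the paper's. The paper does not invoke the differentiation-under-the-integral theorem: it writes the difference quotient $\tfrac{1}{t}\left(F[u+tk]-F[u]\right)$ and expands the generating functions $\phi,\psi$ of $g$ and $h$ to \emph{second} order via Taylor's formula with Lagrangian remainder, so that the first-order terms are exactly the two claimed integrals and the remainder terms carry an explicit factor of $t$; it then only has to bound the remainder integrals uniformly, which it does by noting that the intermediate points $y_t(x),z_t(x)$ stay in the fixed bounded set $B=\prod_i\left(\mathrm{Im}(u^{(i)})+B_0(\Vert k\Vert_\infty)\right)$, invoking the boundedness axiom for the second partials there, and dominating by a Schwartz (hence $L^1$) function. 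You instead compute the pointwise $t$-derivative by the chain and product rules (including the inner exchange for $d^{-1}$) and build a $t$-uniform $L^1$ majorant for $\partial_t\Phi$ from the same bounded set $B$, the contraction $\vert d^{-1}(f)(x)\vert\le\frac12\Vert f\Vert_{L^1}$, and the first-order bound $\vert h_\alpha(u+tk)\vert\le\vert h_\alpha(u)\vert+C\sum_i\vert k^{(i)}\vert$. The essential ingredient is identical in both proofs --- uniform control on $B$ supplied by the boundedness condition in the definition of local density, combined with Schwartz decay of $k$, $g(u)$ and the $h_\alpha(u)$ --- but your version needs only first-order partials (at the price of uniformity over an interval of $t$), while the paper needs second-order partials only at the Lagrange intermediate points; the boundedness axiom covers partial derivatives of all orders, so either choice is legitimate. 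A minor advantage of your write-up is that it treats general $A$ through the product rule directly, where the paper works out only $n=A=1$ and declares the general case analogous.
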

\begin{proof}
Let's consider the case $n=A=1$; the general case is proven analogously. We have $F := I \circ (g \cdot d^{-1} h)$ and let $\phi \in C^\infty(\mathbb{R}^{2 + N})$, $\psi \in C^\infty(\mathbb{R}^{2 + M})$ be the N-local and M-local densities associated to $g$ and $h$ respectively.  Let's write, w.r.t. these two functions, the limit defining the G-differential of $F$ and develop the factors up to second order through the Taylor formula with Lagrangian remainder:
\begin{align*}
&\int_\mathbb{R} \dfrac{\partial g}{\partial u^{(i)}}(u) k^{(i)} \, d^{-1} h(u) dx + \int_\mathbb{R} g(u) \, d^{-1}\left( \dfrac{\partial h}{\partial u^{(i)}}(u) k^{(i)} \right)dx \\ 
&+ \lim_{t \rightarrow 0} \frac{t}{2} \int_\mathbb{R} \frac{\partial^2 \phi}{\partial u^{(i)} \partial u^{(j)}}(x,y_t(x)) k^{(i)}k^{(j)} d^{-1}h(u) dx \\
&+ \lim_{t \rightarrow 0} \frac{t}{2} \int_\mathbb{R} g(u) d^{-1} \left( \frac{\partial^2 \psi}{\partial u^{(i)} \partial u^{(j)}}(x,z_t(x)) k^{(i)}k^{(j)} \right) dx \, + \, ...
\end{align*}
The three dots hide four more terms that can be easily treated in the same way of the two explicitly written. To conclude the proof it's enough to show that the two integrals in the limits are bounded by a constant when $y_t$ and $z_t$ vary.
First of all notice that
\begin{align*}
s^{i j}(u) := k^{(i)}k^{(j)} d^{-1}h(u)
\end{align*}
is an $\mathcal{S}_1$ function.
The crucial fact is that $y_t(x)$ and $z_t(x)$, for each $t$ and $x$, always belong to the bounded set
\begin{align*}
B := \prod_{i=0}^N \left( Im( u^{(i)} ) + B_0(\Vert k \Vert_\infty ) \right)
\end{align*}
So thanks to the boundedness property of derivatives of local densities there are positive real numbers $M_{i j}(B)$ such that for each $t \in (-1,1)$
\begin{align*}
\left \vert \frac{\partial^2 \phi}{\partial u^{(i)} \partial u^{(j)}}(x, y_t) s^{i j}(u)(x)\right\vert \leq
M_{i j}(B) \left \vert s^{i j}(u)(x) \right\vert
\end{align*}
and the last function is in $\mathcal{S}_1 \subset L^1(\mathbb{R})$. This shows that the first limit is zero. The situation for the second one is very similar and can easily be recovered adapting the argument above.
\end{proof}

This lemma brings us to the following result, which gives, combined with the Euler-Lagrange formula, the general form of the G-differential of a $\tilde{\mathcal{W}}_n^1$ functional (by linearity this extends to every element of $\mathcal{W}^1_n$).
\begin{theorem}
Let $F \in \tilde{\mathcal{W}}_n^1$ and consider $v \in \mathcal{S}_n$. Then F is G-differentiable and defined $\frac{\delta F}{\delta v_l(x)} :=  R(x) + \sum_{\alpha=1}^A T_\alpha(x)$ 
with
\begin{align*}
R &:= \left( - \dfrac{d}{dx} \right)^i \left[ \dfrac{\partial g}{\partial {u_l}^{(i)}}(v) \cdot \prod_{\alpha = 1}^A d^{-1}(h_\alpha(v)) \right]\\
T_\alpha &:= - \left( - \dfrac{d}{dx} \right)^k \left[ d^{-1}\left(g(v) \cdot \prod_{\beta \neq \alpha} d^{-1}(h_\beta(v))\right) \dfrac{\partial h_\alpha}{\partial {u_l}^{(k)}}(v) \right]
\end{align*}
we have
\begin{align*}
d_G F(u)[k] = \int_\mathbb{R} \dfrac{\delta F}{\delta u(x)} k(x) dx
\end{align*}
for each $k \in \mathcal{S}_n$.
In this formula $M_\alpha$ is the order of the local density $h_\alpha$ and N the one of $g$.

\begin{proof}
As we did before, we work out the proof for the case $n=1$. This result follows as a consequence of the integration by parts of the integrals appearing in the statement of the previous lemma. From the first integral we quickly find $R$, so we consider the latter. Fixed $i$, integrating it by parts we get
\begin{align*}
&\int_\mathbb{R} g(u) d^{-1} \left( \dfrac{\partial h_{\alpha}}{\partial u^{(i)}}(u)\cdot k^{(i)}\right) \prod_{\beta \neq \alpha}(d^{-1} \circ h_\beta(u)) dx 
\\
&= \left. d^{-1} \left( \dfrac{\partial h_{\alpha}}{\partial u^{(i)}}(u)\cdot k^{(i)}\right)d^{-1}\left(g(u)\prod_{\beta \neq \alpha}(d^{-1} \circ h_\beta(u))\right) \right\vert_{-\infty}^{+\infty}\\
&- \int_\mathbb{R} d^{-1} \left( g(u) \prod_{\beta \neq \alpha}(d^{-1} \circ h_\beta(u)) \right)  \dfrac{\partial h_{\alpha}}{\partial u^{(i)}}(u)\cdot k^{(i)} dx 
\end{align*}
where the boundary term vanishes.
Integrating by parts $i$-times lowering the order of the derivative of $k$ we get $T_\alpha$.
\end{proof}
\end{theorem}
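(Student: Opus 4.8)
The plan is to take the expression for $d_G F(u)[k]$ already furnished by the preceding lemma — which in particular settles the G-differentiability of $F$ — and to reshape its two integral contributions into the claimed form purely by integration by parts. Throughout I would work with $n=1$, the general case following index by index in the same way, and I denote by $g, h_1, \dots, h_A$ the local densities appearing in a representative of $F$, of orders $N$ and $M_\alpha$.

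First I would treat the integral $\int_\mathbb{R} \frac{\partial g}{\partial u^{(i)}}(u)\, k^{(i)} \prod_{\alpha} d^{-1}(h_\alpha(u))\, dx$. The complementary factor $\frac{\partial g}{\partial u^{(i)}}(u)\prod_\alpha d^{-1}(h_\alpha(u))$ lies in $C_b^\infty(\mathbb{R})$, while $k$ and all its derivatives are Schwartz. Integrating by parts $i$ times to strip the derivatives off $k^{(i)}$, each boundary term is a product of a bounded function with some derivative of $k$ and hence vanishes at $\pm\infty$; each integration contributes one sign, so the integral becomes $\int_\mathbb{R} R\, k\, dx$ with $R$ exactly as stated.

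The genuinely nonlocal step is the rearrangement of each term $\int_\mathbb{R} g(u)\, d^{-1}\!\big(\frac{\partial h_\alpha}{\partial u^{(i)}}(u) k^{(i)}\big)\prod_{\beta\neq\alpha} d^{-1}(h_\beta(u))\, dx$. I would first observe, applying the third listed property of $d^{-1}$ repeatedly, that $Q := g(u)\prod_{\beta\neq\alpha} d^{-1}(h_\beta(u))$ is an $\mathcal{S}_1$ function, so that $d^{-1}(Q)$ is defined. Writing $Q = (d^{-1}(Q))'$ and integrating by parts once, with $P := d^{-1}\!\big(\frac{\partial h_\alpha}{\partial u^{(i)}}(u) k^{(i)}\big)$, produces the boundary term $[P\cdot d^{-1}(Q)]_{-\infty}^{+\infty}$ together with $-\int_\mathbb{R} P'\, d^{-1}(Q)\, dx$, where $P' = \frac{\partial h_\alpha}{\partial u^{(i)}}(u) k^{(i)}$. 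Since both $P$ and $d^{-1}(Q)$ are antiderivatives of Schwartz functions, their product is precisely of the form handled by equation (\ref{secondProperty}), so the boundary term is zero. What remains, $-\int_\mathbb{R} \frac{\partial h_\alpha}{\partial u^{(i)}}(u)\, d^{-1}(Q)\, k^{(i)}\, dx$, is now local in $k$, and integrating by parts $i$ further times — the boundary terms again being killed by the Schwartz decay of the derivatives of $k$ against the bounded factor $\frac{\partial h_\alpha}{\partial u^{(i)}}(u)\, d^{-1}(Q)$ — yields $\int_\mathbb{R} T_\alpha\, k\, dx$ with the stated $T_\alpha$.

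Summing the contributions and pulling $k$ out of the integral then gives $d_G F(u)[k] = \int_\mathbb{R}\big(R + \sum_\alpha T_\alpha\big)\, k\, dx$, as required. I expect the one delicate point to be the vanishing of the boundary term in the first, nonlocal integration by parts: the individual factors $d^{-1}(\cdots)$ are merely bounded rather than decaying, so the term does \emph{not} vanish factor by factor; it is exactly the cancellation of limits for a product of two $d^{-1}$'s recorded in equation (\ref{secondProperty}) — available only once one has checked that $Q$ genuinely lands in $\mathcal{S}_1$ — that rescues the computation.
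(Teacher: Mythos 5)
Your proposal is correct and takes essentially the same route as the paper's proof: integrate by parts the two families of integrals produced by the preceding lemma, with the boundary term in the nonlocal case cancelled via property (\ref{secondProperty}) applied to a product of two $d^{-1}$'s of Schwartz functions. You merely make explicit two points the paper leaves tacit, namely that $Q := g(u)\prod_{\beta\neq\alpha}d^{-1}(h_\beta(u))$ lies in $\mathcal{S}_1$ and why the boundary term vanishes.
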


With a completely analogous proof using the representation (\ref{rep}) one finds the G-differentiability of general WNL functionals and obtains a formula for their variational derivative. In order to keep a readable notation without loosing any conceptual point, we write this formula only for functionals $F$ having density of the type
\begin{equation}
\label{nonlocalchains}
g \prod_{\alpha = 1}^A d^{-1} \left( h_{\alpha, 1}d^{-1} \left( ... \left( h_{\alpha, D_\alpha-1}d^{-1} \left( h_{ \alpha, D_\alpha} \right)\right)... \right)\right) 
\end{equation}
for such a functional we obtain $\frac{\delta F}{\delta v_l(x)} :=  R(x) + \sum_{\alpha = 1}^A \sum_{\delta = 1}^{D_\alpha} T^\delta_\alpha(x)$ where (omitting all evaluations in $v$)
\begin{align*}
&R := \left( - \dfrac{d}{dx} \right)^i \left[ \dfrac{\partial g}{\partial {u_l}^{(i)}} \cdot \prod_{\alpha=1}^A H_\alpha \right]\\
&T^\delta_\alpha :=(-1)^\delta \left( - \dfrac{d}{dx} \right)^k \left[ \check{H}^\delta_\alpha \left( g \cdot \prod_{\beta \neq \alpha} H_\beta \right) \, \dfrac{\partial h_{\alpha, \delta}}{\partial {u_l}^{(k)}} \, \widehat{H}^{\delta+1}_\alpha \right]
\end{align*}
where we have defined
\begin{flalign*}
&\widehat{H}^\delta_\alpha := d^{-1} \left( h_{\alpha, \delta}d^{-1} \left( ... \left( h_{\alpha, D_\alpha-1}d^{-1} \left( h_{D_\alpha} \right)\right)... \right)\right) && \\
&\check{H}^\delta_\alpha(*) := d^{-1}\left( h_{\alpha, \delta-1} d^{-1}\left(...  \left(h_{\alpha, 1} d^{-1} \left( * \right)\right)...\right)\right)&&\\
& H_\alpha := \widehat{H}_\alpha^1
\end{flalign*}
Here $N$ is the order of $g$ and $M_\alpha^\delta$ the one of $h_{\alpha, \delta}$. Thanks to these computations we get the following
\begin{corollario}
Let $F \in \mathcal{W}_n$. Then its variational derivative w.r.t. every $v \in \mathcal{S}_n$ is bounded.
\end{corollario}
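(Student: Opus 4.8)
The plan is to read the boundedness directly off the explicit formula for $\delta F/\delta v_l$ just obtained. By linearity of the variational derivative it is enough to treat a functional whose density is a single chain of the form (\ref{nonlocalchains}) — the fully general density (\ref{rep}) is handled by the same bookkeeping — so that $\delta F/\delta v_l = R + \sum_{\alpha=1}^A \sum_{\delta=1}^{D_\alpha} T^\delta_\alpha$ is a \emph{finite} sum, and it suffices to show each of $R$ and each $T^\delta_\alpha$ is bounded on $\mathbb{R}$. First I would isolate the three elementary building blocks occurring in these expressions: (a) Schwartz functions $s \in \mathcal{S}_1$, such as $g(v)$ and the $h_{\alpha,\delta}(v)$; (b) functions of the form $d^{-1}(s)$ with $s \in \mathcal{S}_1$; and (c) evaluations $\psi(x,v(x),\dots,v^{(N)}(x))$ in which $\psi$ is a local density or one of its partial derivatives of any order, such as $g(v)$ and the $\partial g/\partial u_l^{(i)}(v)$, $\partial h_{\alpha,\delta}/\partial u_l^{(k)}(v)$.

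Next I would introduce the class $\mathcal{A}$ of all finite $\mathbb{R}$-linear combinations of finite products of blocks of types (a), (b), (c), and establish two facts. \emph{Every element of $\mathcal{A}$ is bounded}: a type-(a) block is bounded because Schwartz functions are, a type-(b) block because $d^{-1}$ takes values in $C_b^\infty(\mathbb{R})$, and a type-(c) block because $v \in \mathcal{S}_n$ forces each $v^{(i)}$ to have bounded image, so that the argument stays in a bounded set $B \subset \mathbb{R}^{n(N+1)}$ on which $\psi$ is bounded by the defining property of local densities; products and linear combinations of bounded functions are bounded. \emph{The class $\mathcal{A}$ is closed under $\tfrac{d}{dx}$}: by Leibniz it suffices to differentiate each block, and $\tfrac{d}{dx}s \in \mathcal{S}_1$, while $\tfrac{d}{dx}\,d^{-1}(s) = s \in \mathcal{S}_1$ since $d^{-1}(s)$ is an antiderivative of $s$, and the chain rule gives $\tfrac{d}{dx}\,\psi(x,v,\dots) = (\partial_x\psi)(v) + \sum_i (\partial\psi/\partial u^{(i)})(v)\, v^{(i+1)}$, which is a type-(c) block plus a sum of products of type-(c) and type-(a) blocks, all lying in $\mathcal{A}$. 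Hence $\tfrac{d}{dx}\mathcal{A} \subseteq \mathcal{A}$, so any iterated derivative of a product of blocks stays bounded.

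It then remains to verify that $R$ and $T^\delta_\alpha$ arise by applying powers of $-\tfrac{d}{dx}$ to products of blocks. Here the nested objects are themselves blocks of type (b): working from the innermost $d^{-1}(h_{\alpha,D_\alpha}(v))$ outward, the third property of $d^{-1}$ guarantees that at each level the argument $h_{\alpha,\delta}(v)\,\widehat{H}^{\delta+1}_\alpha$ is a product of a Schwartz and a bounded function, hence again Schwartz, so every $\widehat{H}^\delta_\alpha = d^{-1}(\mathcal{S}_1)$ is of type (b); the same inside-out induction, fed the Schwartz input $g(v)\prod_{\beta\neq\alpha}H_\beta$, shows $\check{H}^\delta_\alpha\big(g(v)\prod_{\beta\neq\alpha}H_\beta\big)$ is of type (b) as well. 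Thus the brackets defining $R$ and $T^\delta_\alpha$ are products of blocks, i.e. elements of $\mathcal{A}$, and applying $(-\tfrac{d}{dx})^i$ or $(-\tfrac{d}{dx})^k$ keeps them in $\mathcal{A}$; boundedness of $\mathcal{A}$ finishes the argument.

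The step I expect to be the main obstacle is exactly this last one: a priori the outer operators $(-\tfrac{d}{dx})^i$ threaten boundedness, since the derivative of a bounded function need not be bounded. The whole point is that the bounded functions appearing are never arbitrary — they are either $d^{-1}$ of a Schwartz function (whose derivative is that Schwartz function) or evaluations of local densities (whose derivatives are again of the controlled types (a) and (c)). Making this closure precise, together with the inside-out induction ensuring that $d^{-1}$ is only ever applied to genuinely Schwartz arguments, is where the real content lies; once $\mathcal{A}$ is set up the conclusion is immediate.
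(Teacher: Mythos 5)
Your proposal is correct and follows essentially the same route as the paper: both reduce to densities of the form (\ref{nonlocalchains}), expand the outer $(-\tfrac{d}{dx})^i$ via Leibniz and the chain rule, and conclude from the boundedness of partial derivatives of local densities on $\mathbb{R}\times B$, the fact that $d^{-1}$ lands in $C_b^\infty(\mathbb{R})$ with $\tfrac{d}{dx}d^{-1}(s)=s$, and the third property of $d^{-1}$ guaranteeing that every nested argument is genuinely Schwartz. Your packaging of these facts as a differentiation-closed algebra $\mathcal{A}$ of bounded functions is a slightly more systematic bookkeeping of the same argument, not a different one.
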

\begin{proof}
For simplicity, we will work out the proof for functionals having densities of the form (\ref{nonlocalchains}) and for $n=1$. In the whole computation we omit the evaluation at $v$ of all the local densities.
Consider first the part given by $R$. We claim that $\left( \dfrac{d}{dx} \right)^i \left[ \dfrac{\partial g}{\partial {u_l}^{(i)}} \cdot \prod_{\alpha=1}^A H_\alpha \right]$ is bounded. We have by Leibniz rule that the expression above is equal to
\begin{align*}
\sum_{j=0}^i \binom{j}{i} \left(\frac{\partial g}{\partial {u_l}^{(i)}}\right)^{(j)}\left(\prod_{\alpha=1}^A H_\alpha\right)^{(i-j)}
\end{align*}
Now $({\partial g}/{\partial {u_l}^{(i)}})^{(j)}$, by chain rule, can be written as finite sum of partial derivatives of $g$, some of which are multiplied by a derivative of $v$. We have by definition that those partial derivative are bounded and $v$ is a Schwartz function, so the whole sum is bounded. On the other hand the term $\left(\prod_{\alpha=1}^A H_\alpha\right)^{(i-j)}$ is bounded too, as $\prod_{\alpha=1}^A H_\alpha$ is a product of bounded functions with bounded derivatives of all orders. This holds because
\begin{align*}
\dfrac{d}{dx}H_\alpha = h_{\alpha, 1}d^{-1} \left( ... \left( h_{\alpha, D_\alpha-1}d^{-1} \left( h_{ \alpha, D_\alpha} \right)\right)... \right) \in \mathcal{S}_n
\end{align*}
by the third property of $d^{-1}$ highlighted after its definition. For what concerns $T_\alpha^\delta$, the argument for proving that it's bounded is essentially the same we used above for $R$.
\end{proof}

\section{FROM LOCAL TO GLOBAL:\\ FUNCTIONALS ON MANIFOLDS AND POISSON BRACKETS}\label{Sect3}

Let's now give a global interpretation to these functionals. The main reason that motivates this kind of globalization process is that some properties of the brackets we are going to study, such as their relationship with the Riemannian geometry of Euclidean hypersurfaces, are easily understood once we think of the brackets as global objects. 
I order to perform the needed constructions we have to shift our attention toward geometry, focusing on manifolds modeled on infinite dimensional spaces. The theory of such manifolds is very rich and well studied: see for example (Ref. \cite{Lang}) for a discussion of Banach manifolds and (Ref. \cite{Hamilton}) for the case of Fréchet manifolds. We will only need some elementary constructions, so we describe them explicitly in this section.
\begin{definizione}
Let $\Omega \subseteq \mathbb{R}^n$ an open neighborhood of the origin. We define the open set $\mathcal{S}(\Omega) \subseteq \mathcal{S}_n$ of the Schwartz functions having image in $\Omega$. We call \emph{local densities on $\Omega$} the restrictions of local densities to $\mathcal{S}(\Omega)$ and \emph{local functionals on $\Omega$} the compositions of the integral functional $I$ with a local density on $\Omega$. Analogously, we define the \emph{WNL functionals on $\Omega$} as before where all the local densities appearing in the previous definitions are now local densities on $\Omega$. The spaces of these functionals will be denoted $\mathcal{L}(\Omega)$ and $\mathcal{W}(\Omega)$.
\end{definizione}
\begin{remark}
	Notice that $\mathcal{S}(\Omega)$ is an open subset of the Fréchet space $\mathcal{S}_n$.
\end{remark}
\begin{remark}
	W.r.t our previous notation, we have $\mathcal{S}(\mathbb{R}^n) = \mathcal{S}_n$,\, $\mathcal{L}(\mathbb{R}^n) = \mathcal{L}_n$ and $\mathcal{W}(\mathbb{R}^n) = \mathcal{W}_n$.
\end{remark}

Let $M$ be a smooth connected finite dimensional manifold and fix $y \in M$. We want this $y$ to play the role of the origin in our manifold, following the approach outlined in (Ref.~\cite{MaltNov}).
\begin{definizione}
	Let $\mathcal{A} := \left\lbrace (U_\lambda, \varphi_\lambda)\right\rbrace_{\lambda \in \Lambda}$ be the subset of the maximal atlas of $M$ such that $U_\lambda$ is connected,  $y \in U_\lambda$ and $\varphi_\lambda(y) = 0$ for each $\lambda \in \Lambda$. We will say that $\mathcal{A}$ is the \textit{maximal atlas for the pointed manifold $(M,y)$}.
\end{definizione} 
We point out that the submanifold of $M$ covered by $\mathcal{A}$ is the the whole $M$ itself. This is proven in proposition \ref{HomManifolds}. We split this result in three parts.
\begin{lemma}
	Let $\mathbb{B}\subset \mathbb{R}^n$ be a convex open subset. Then for every $a, b \in \mathbb{B}$ there is $\phi \in \text{Aut}(\mathbb{R}^n)$ such that $\phi(a) = b$ and $\phi_{\vert \mathbb{B}^c} = \mathbb{1}_{\mathbb{B}^c}$.
\end{lemma}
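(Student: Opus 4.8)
The plan is to realize $\phi$ as the time-one flow of a smooth, compactly supported vector field on $\mathbb{R}^n$ whose support lies inside $\mathbb{B}$ and which is constant, equal to $b-a$, along the segment joining $a$ and $b$. Flowing for unit time then carries $a$ to $b$ while fixing everything outside the support.

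First I would invoke convexity: the closed segment $[a,b] = \{(1-t)a + tb : t \in [0,1]\}$ is a compact subset of the open set $\mathbb{B}$. By the smooth Urysohn lemma there is a bump function $\chi \in C_c^\infty(\mathbb{R}^n)$ with $0 \le \chi \le 1$, $\mathrm{supp}(\chi) \subset \mathbb{B}$, and $\chi \equiv 1$ on an open neighborhood $V$ of $[a,b]$ satisfying $V \subset \mathbb{B}$. I then define the vector field $X(p) := \chi(p)\,(b-a)$. It is smooth and compactly supported, hence complete, so its flow $\Phi^t_X$ is defined for all $t \in \mathbb{R}$ and every $\Phi^t_X$ is a diffeomorphism of $\mathbb{R}^n$; I set $\phi := \Phi^1_X \in \mathrm{Aut}(\mathbb{R}^n)$.

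Next I would verify the two required properties. On $\mathbb{B}^c$ the field $X$ vanishes, so each such point is stationary under the flow, which gives $\phi_{\vert \mathbb{B}^c} = \mathbb{1}_{\mathbb{B}^c}$. For $\phi(a) = b$, I note that $X \equiv b-a$ on $V$, so the affine curve $\gamma(t) := a + t(b-a)$ satisfies $\dot\gamma(t) = b-a = X(\gamma(t))$ as long as $\gamma(t) \in V$; since $\gamma([0,1]) = [a,b] \subset V$, uniqueness of integral curves shows that $\gamma$ is precisely the integral curve of $X$ issuing from $a$ on $[0,1]$. Evaluating at $t=1$ yields $\phi(a) = \gamma(1) = b$.

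The routine ingredients—existence of the bump function and completeness of compactly supported vector fields—are standard. The only point requiring care is the coordination between the neighborhood $V$ and the integral curve: I must arrange $\chi$ to be identically $1$ on a neighborhood containing the whole segment, so that the straight-line path never leaves the region where $X$ is constant and therefore genuinely is the integral curve reaching $b$ at time one. This is exactly where convexity of $\mathbb{B}$ enters, guaranteeing $[a,b] \subset \mathbb{B}$ and hence the existence of such a $V$.
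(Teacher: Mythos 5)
Your proof is correct and follows essentially the same route as the paper's: both construct $\phi$ as the time-one flow of the compactly supported vector field $\chi\cdot(b-a)$ with $\chi$ a bump function equal to $1$ on a neighborhood of the segment $[a,b]$ and supported in $\mathbb{B}$. Your version is slightly more careful in spelling out, via uniqueness of integral curves, why the flow actually carries $a$ to $b$ at time one, a point the paper leaves implicit.
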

\begin{proof}
	Consider an open, relatively compact subset $W$ of $\mathbb{B}$ containing the segment joining $a$ and $b$.
	Let $f \in C_c^\infty(\mathbb{R}^n)$ be a bump function supported in $\mathbb{B}$ such that $f_{\vert W} = 1$. Consider the compactly supported smooth vector field $V$ on $M$ defined by $V(x):= f(x)(b-a):= f(x)\sum_{i=1}^n (b_i - a_i)\frac{\partial}{\partial x^i}$. This is a complete vector field, so it admits a one parameter group of automorphisms $\left\lbrace \varphi_t \right\rbrace_{t \in \mathbb{R}}$. We can consider the automorphism $\phi:= \varphi_1$. It sends $a$ to $b$ and it's clearly the identity outside $\mathbb{B}$.
\end{proof}
To simplify the notation a bit, given a smooth manifold $M$ and an open subspace $T$, we define $\text{Aut}_{T}(M) := \left \lbrace \phi \in \text{Aut}(M) \quad s.t. \quad \phi_{\vert T^c} = \mathbb{1}_{T^c} \right \rbrace$. Notice that in our notation elements of $Aut_T(M)$ fix the complement of T and not T itself. Moreover, given a chart $(U,h)$ for $M$, we will say that it's \textit{convex} if $h(U)$ is a convex subset of $\mathbb{R}^n$ and that it's a \textit{$T$-chart} if $U\subseteq T$.
Notice that for each $x \in T$ there exists a nonempty convex $T$-chart around it.
\begin{proposizione}
	Let $M$ be an $n$-dimensional smooth manifold. For every two points $x,y \in M$ and for each connected open $T$ containing them there is an automorphism $\phi \in \text{Aut}_T(M)$ such that $\phi(x)=y$.
\end{proposizione}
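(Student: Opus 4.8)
The plan is to deduce the global statement from the purely local homogeneity supplied by the previous Lemma, via a standard connectedness argument. Define a relation $\sim$ on $T$ by declaring $x \sim z$ whenever there exists $\phi \in \text{Aut}_T(M)$ with $\phi(x) = z$. This is an equivalence relation: reflexivity follows from $\mathbb{1}_M \in \text{Aut}_T(M)$, while symmetry and transitivity hold because $\text{Aut}_T(M)$ is a group, since the inverse and the composition of diffeomorphisms fixing $T^c$ again fix $T^c$. The proposition is then equivalent to showing that $\sim$ has a single equivalence class.

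The heart of the argument is to prove that each equivalence class is open in $T$. Given $x \in T$, the remark preceding the statement provides a nonempty convex $T$-chart $(U, h)$ around $x$, so that $\mathbb{B} := h(U)$ is convex and $U \subseteq T$. For an arbitrary $z \in U$ I would apply the Lemma to the points $h(x), h(z) \in \mathbb{B}$, obtaining $\psi \in \text{Aut}(\mathbb{R}^n)$ with $\psi(h(x)) = h(z)$ and $\psi_{\vert \mathbb{B}^c} = \mathbb{1}_{\mathbb{B}^c}$. Pulling this back, $h^{-1} \circ \psi \circ h$ is a diffeomorphism of $U$; because the Lemma builds $\psi$ as the time-one flow of a compactly supported vector field, $\psi$ is the identity outside a compact subset of $\mathbb{B}$, so the pullback is the identity outside a compact subset of $U$ and therefore extends by the identity to a diffeomorphism $\phi$ of all of $M$. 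This $\phi$ fixes $M \setminus U \supseteq T^c$, hence $\phi \in \text{Aut}_T(M)$, and $\phi(x) = z$. Thus the whole of $U$ lies in the equivalence class of $x$, which shows the classes are open.

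To finish, I would invoke connectedness: $T$ is the disjoint union of the open equivalence classes of $\sim$, so if there were more than one class, $T$ would split as a disjoint union of two nonempty open sets, namely one class against the union of the remaining ones, contradicting that $T$ is connected. Hence there is a single class, so all points of $T$—in particular the given $x$ and $y$—are equivalent, which is precisely the asserted existence of $\phi \in \text{Aut}_T(M)$ with $\phi(x) = y$.

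I expect the main obstacle to be the extension step: verifying that the diffeomorphism produced inside the chart truly glues to a global element of $\text{Aut}_T(M)$. This rests on the Lemma delivering an automorphism that is the identity not merely on $\mathbb{B}^c$ but outside a \emph{compact} subset of $\mathbb{B}$, which its flow construction indeed guarantees, so that transporting it through $h$ yields a map equal to the identity near $\partial U$ and extendable by the identity with no loss of smoothness.
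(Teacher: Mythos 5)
Your proof is correct and follows essentially the same route as the paper: the local step is the identical conjugation-and-extension of the Lemma's automorphism through a convex $T$-chart, and your open-equivalence-classes argument is just a cleaner packaging of the paper's proof that the set $W_{T,y}$ of points reachable from $y$ is nonempty, open and closed in the connected set $T$. Your explicit attention to why the chart-level diffeomorphism extends by the identity (compact support of the generating vector field) is a detail the paper glosses over, but it does not constitute a different approach.
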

\begin{proof}
	At first, we need some additional assumptions: We assume that there exists a convex $T$-chart $(U,h)$ such that $x,y \in U$. We can pick a smaller open convex $\mathbb{B}\subset h(U)$ containing the images of $x$ and $y$. Let $\phi \in Aut(\mathbb{R}^n)$ be the map defined in the lemma above for the choice $a:= h(x)$ and $b:= h(y)$. This is the identity outside $\mathbb{B}$, hence $h^{-1} \circ \phi \circ h \in \text{Aut}(U)$ can be extended to the whole $M$ by letting it be the identity on $U^c$. This is an automorphism of $M$ satisfying the required conditions. Now, in order to prove the statement in the general case, consider an open neighborhood $T$ of $y$. We define
	\begin{align*}
		W_{T,y} := \left\lbrace x \in T : \quad \exists \phi \in \text{Aut}_T(M) \quad s.t. \quad 
		\phi(x)=y \right\rbrace \subseteq T
	\end{align*}
	For a moment we forget about $M$ and we consider the ambient manifold to be $T$. Then
	\begin{itemize}
	\item $W_{T,y}$ is open: let $x \in W_{T,y}$, consider a convex $T$-chart $(U,h)$ around it and pick $z \in U$. Clearly $\exists \psi \in \text{Aut}_U(M)$ mapping $z$ to $x$ by the first part of the proof. We have by assumption an automorphism $\phi \in \text{Aut}_T(M)$ sending $x$ to $y$. Then $\phi \circ \psi$ is in $\text{Aut}_T(M)$ and $\phi \circ \psi(z) = y$, hence $z \in W_{T,y}$.
	\item $W^c_{T,y}$ is open: let $x \in W^c_{T,y}$, consider a convex $T$-chart $(U,h)$ around it and pick $z \in U$. Assume by contradiction that there is an automorphism $\psi \in \text{Aut}_T(M)$ sending $z$ to $y$. By the first part of the proof $\exists \eta \in \text{Aut}_U(M)$ mapping $x$ to $z$. Then $\psi \circ \eta\in \text{Aut}(M)$ is such that $\phi \circ \psi(x) = y$, hence $x \in W_{T,y}$. This is a contradiction, so $U \subseteq W^c_{T,y}$.
	\end{itemize}
	So $W_{T,y}$ is nonempty (containing $y$) and both open and closed. $T$ is connected, so $W_{T,y}=T$. The fact that this holds for every $T$ and $y$ is exactly our claim.
\end{proof}
\begin{proposizione}\label{HomManifolds}
	Let $M$ be a smooth connected $n$-dimensional manifold. Then for every $x, y \in M$ there is a connected coordinate patch $(U,h)$ such that $x, y \in U$.
\end{proposizione}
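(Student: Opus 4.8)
The plan is to leverage the homogeneity statement of the preceding proposition to transport $x$ inside a fixed chart around $y$ by an automorphism that fixes $y$, and then to pull that chart back along the automorphism. First I would dispose of the trivial case $x=y$, where any connected chart around $x$ does the job, and assume $x\neq y$ from now on. Then I would fix a connected chart $(V,k)$ with $y\in V$ and choose an auxiliary point $p\in V\setminus\{y\}$, which will be the target to which $x$ is moved.

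The key observation is that charts are preserved by pullback along automorphisms: if $\phi\in\text{Aut}(M)$ is any automorphism, then $(\phi^{-1}(V),\,k\circ\phi|_{\phi^{-1}(V)})$ is again a connected chart, since $\phi^{-1}(V)$ is open and connected (the image of the connected $V$ under the homeomorphism $\phi^{-1}$) and $k\circ\phi$ is a diffeomorphism onto $k(V)\subseteq\mathbb{R}^n$. Moreover $x\in\phi^{-1}(V)$ exactly when $\phi(x)\in V$, and similarly for $y$. Hence it suffices to produce a single $\phi\in\text{Aut}(M)$ with $\phi(x)\in V$ and $\phi(y)\in V$, as such a $\phi$ yields the desired connected patch $(\phi^{-1}(V),k\circ\phi)$ containing both $x$ and $y$.

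To build such a $\phi$ I would apply the preceding proposition on the open set $T:=M\setminus\{y\}$, taken together with the two points $x,p\in T$. Provided $T$ is connected, this yields $\phi\in\text{Aut}_{T}(M)$ with $\phi(x)=p$. Since $T^{c}=\{y\}$, every element of $\text{Aut}_{T}(M)$ fixes $y$, so simultaneously $\phi(x)=p\in V$ and $\phi(y)=y\in V$, and we conclude by the pullback observation above.

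The one step that needs care — and I expect it to be the only genuine obstacle — is the connectedness of $T=M\setminus\{y\}$, which is precisely what allows us to invoke the proposition while keeping $y$ fixed. For $n\geq 2$ this is automatic: $M\setminus\{y\}$ is a manifold, hence locally path connected, and any path in the (path connected) $M$ joining two points of $M\setminus\{y\}$ can be rerouted around $y$ inside a coordinate ball, since $\mathbb{R}^{n}\setminus\{0\}$ is path connected for $n\geq 2$. The exceptional situation is $n=1$, where removing a point can disconnect $M$; but a connected smooth $1$-manifold possessing a cut point is diffeomorphic to $\mathbb{R}$, which is itself the domain of a single global connected chart, so the claim is immediate, while the only other connected $1$-manifold, $S^{1}$, still has $S^{1}\setminus\{y\}$ connected and is covered by the main argument. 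I would therefore organize the write-up as a short dichotomy: either $M\setminus\{y\}$ is connected and the construction above applies, or $M\cong\mathbb{R}$ and $M$ itself serves as the required connected coordinate patch.
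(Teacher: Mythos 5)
Your proof is correct and follows essentially the same route as the paper's: both remove one of the two points, invoke the preceding homogeneity proposition on the connected complement $M\setminus\{\text{point}\}$ to move the other point into a fixed connected chart, and pull the chart back along the resulting automorphism (the paper fixes $x$ and moves $y$; you fix $y$ and move $x$, which is immaterial). The $n=1$ case is handled in both via the classification of connected one-dimensional manifolds, so there is nothing substantively different to compare.
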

\begin{proof}
	If $n=1$ we have by (Ref. \cite{Milnor}) that $M$ is diffeormorphic either to an open interval of the real line or to $S^1$. In both cases the result is trivial. Assume now that $n$ is bigger than 1.
	Let $(U,h)$ be a connected chart around $x$ and consider $z \in U\setminus\lbrace x \rbrace$. By $n \geq 2$ we know that $M\setminus\lbrace x \rbrace$ is connected, so we have an automorphism $\phi \in \text{Aut}_{M\setminus\lbrace x \rbrace}(M)$ sending $y$ to $z$ by the proposition above. This precisely means that $\phi(x) = x$ and $\phi(y) = z$. Now consider the diffeomorphism $h \circ \phi : \phi^{-1}(U) \rightarrow \mathbb{R}^n$.This gives us a chart $(\phi^{-1}(U), h \circ \phi)$ containing both $x$ and $y$. 
\end{proof}
The kind of Fréchet manifolds we are interested in are sometimes called loop spaces of smooth manifolds. The standard reference for their theory is the book (Ref. \cite{KrieglMich}) of A. Kriegl and P. W. Michor. A more concise treatment of the subject can be found in the paper (Ref. \cite{Stacey}) of A. Stacey.
The idea of the following construction is to build a Fréchet manifold $M^\star_y$, modeled on $\mathcal{S}_n$, whose points are Schwartz functions valued in some coordinate patch $(U, \varphi)$ of $M$ such that $y \in U$ and $\varphi(y)=0$. We will call this new infinite dimensional manifold $M^\star_y$ the \textit{loop space over} $(M,y)$. We won't care about the smooth structure of this space (and indeed we won't even use its topology) so we limit ourselves to defining it at the topological level, as a $C^0$-Fréchet manifold.
\begin{remark}
	The name "loop space" comes from the fact that the kind of functions we consider can be regarded as functions from $S^1$ to $M$. Indeed, these functions are such that their values and those of all of their derivatives go to zero when the variable goes to $\pm \infty$. Intuitively, by adding to the domain the point at infinity and extending the map by sending such point to zero we obtain a well defined map from the projective line (and hence $S^1$) to $M$. 
\end{remark}
The idea of using the spaces of loops $S^1 \rightarrow M$ to provide a setting for studying Poisson brackets appears, for example, in the paper (Ref. \cite{Mokhov}) of O. Mokhov. Let's now construct the loop space.
First of all, we will simplify the notation a bit; it's easy to lose track of all the objects we introduced, so we will recall some definitions below.
\begin{itemize}
	\item $\mathcal{A} := \left\lbrace (U_\lambda, \varphi_\lambda)\right\rbrace_{\lambda \in \Lambda}$ will be the maximal atlas for the pointed manifold $(M,y)$.
	\item We will write $\tilde{U}_\lambda$ instead of $\varphi_\lambda(U_\lambda) \subseteq \mathbb{R^n}$.
	\item The symbol $\mathcal{S}^\lambda$ will be used instead of $\mathcal{S}(\tilde{U}_\lambda)$. Recall that by definition
	\begin{align*}
		\mathcal{S}^\lambda := \left \lbrace f : \mathbb{R} \rightarrow \tilde{U}_\lambda \quad : \quad f_i \in \mathcal{S}_1 \quad \forall i \in \lbrace 1,  ..., n \rbrace  \right \rbrace \subseteq \mathcal{S}_n
	\end{align*}
	\item In a completely analogous way we will write $\mathcal{L}^\lambda := \mathcal{L}(\tilde{U}_\lambda)$ and $\mathcal{W}^\lambda := \mathcal{W}(\tilde{U}_\lambda)$
\end{itemize}  
In order to build the topological Fréchet manifold $M_y^\star$ we adopt the strategy suggested by the following classical, well known result. What this result tells us is that it's always possible to reconstruct a manifold from a given atlas (the analogue for finite dimensional manifolds is proven, for example, in Lemma 1.35 of the book (Ref. \cite{Lee}) of J. M. Lee). In order to make the construction a bit clearer we will write a proof of the result.
\begin{proposizione}
	Assume we have a family $\left\lbrace \mathcal{O}_i \right \rbrace_{i \in I}$ of open subsets of Fréchet spaces such that, for each ordered couple $\mathcal{O}_i, \mathcal{O}_j$ of these sets, there is an open $\mathcal{O}_{i j} \subseteq \mathcal{O}_i$ and a homeomorphism of Fréchet spaces $\varphi_{j i}: \mathcal{O}_{i j} \rightarrow \mathcal{O}_{j i}$. Assume moreover that these homeomorphisms satisfy the usual cocycle conditions, namely $\varphi_{i i} = \mathbb{1}_{\mathcal{O}_i}$ and $\varphi_{k i} = \varphi_{k j} \circ \varphi_{j i}$ $\forall i, j, k \in I$.
	If $\forall x \in \mathcal{O}_i, \forall y \in \mathcal{O}_j$ one of
	\begin{align}\label{Hausdorff}
		&\exists k \in I \quad : \quad x \in \mathcal{O}_{i k} \text{  and  } y \in \mathcal{O}_{j k} \nonumber\\ 
		&\exists k_1, k_2 \in I \quad : \quad x \in \mathcal{O}_{i k_1}, y \in \mathcal{O}_{j k_2}\, \text{  and  } \,\varphi_{k_1 i}{x} \notin \partial \mathcal{O}_{k_1 k_2} \subset \mathcal{O}_{k_1}
	\end{align}
	holds, then there exists a (unique up to homeomorphism) Fréchet manifold $B$ together with an atlas $\left \lbrace (W_i, \phi_i)\right \rbrace_{i \in I}$ such that
	\begin{itemize}
		\item $\phi_i : W_i \rightarrow \mathcal{O}_i$ is an homeomorphism and $\phi_i(W_i \cap W_j) = \mathcal{O}_{i j}$.
		\item For each couple of charts $W_i$, $W_j$ the transition function is exactly $\varphi_{j i} : \mathcal{O}_{i j} \rightarrow \mathcal{O}_{j i}$.
	\end{itemize}
\end{proposizione}
\begin{proof}
For what concerns existence, define the topological space
\begin{align*}
B := \bigslant{\bigsqcup_{i \in I} \mathcal{O}_i}{\sim}
\end{align*}
where $\sim$ is the relation on the disjoint union defined by $(t)_i \sim (s)_j \iff t \in \mathcal{O}_{i j} \,\, \land \,\, \varphi_{j i}(t) = s$. This is an equivalence relation by the cocycle conditions and this space is Hausdorff because of (\ref{Hausdorff}).
We can consider the subsets $B_i := \pi_i(\mathcal{O}_i)$ that cover $B$, where $\pi_i:\mathcal{O}_i \rightarrow B$ is composition of the inclusion in the disjoint union and the projection onto $B$. Clearly $\pi_i$ is an injection. Then the atlas of $B$ is given by the functions $\pi_i^{-1}: B_i \rightarrow \mathcal{O}_i$ inverted after restricting the target of $\pi_i$ to $B_i$.
We have that $\pi_i^{-1}( B_i \cap B_j ) = \left \lbrace x \in \mathcal{O}_i \quad : \quad \exists y \in \mathcal{O}_j \quad s.t \quad (x)_i \sim (y)_j \right \rbrace = \mathcal{O}_{i j}$.
The definition of our equivalence relation gives us the result regarding the transition functions, in fact
\begin{align*}
	\pi_j^{-1} \circ \pi_i (x) = \pi_j^{-1}[(x)_i] = \pi_j^{-1}[(\varphi_{ji}(x))_j] = \varphi_{ji}(x)
\end{align*}
To prove uniqueness consider another such Fréchet manifold $C$ with an atlas $\lbrace(V_i, \psi_i)\rbrace_{i \in I}$ satisfying the conditions above. Then the various homeomorphisms $\phi_i^{-1} \circ \psi_i: V_i \rightarrow W_i$  agree on the overlaps and can therefore be glued to a global homeomorphism between the two Fréchet manifolds $C$ and $B$. 
\end{proof}
We will apply the previous result to the following data:
\begin{enumerate}
\item We have the family $\left\lbrace \mathcal{S}^\lambda \right\rbrace_{\lambda \in \Lambda}$ of open subsets of the Fréchet space $\mathcal{S}_n$.
\item For each pair $\lambda, \mu \in \Lambda$ we have an open subset
\begin{align*}
	\mathcal{S^{\lambda \mu}}:= \left\lbrace u \in \mathcal{S}^\lambda : \quad \varphi_\lambda^{-1} \circ u (\mathbb{R}) \subset U_\mu \right\rbrace \subseteq \mathcal{S}^\lambda
\end{align*} 
\item Moreover for every couple $\lambda, \mu \in \Lambda$ there is a function 
\begin{align*}
\varphi_{\mu \lambda}: \mathcal{S}^{\lambda \mu} \rightarrow \mathcal{S}^{\mu \lambda} \quad \quad s.t. \quad \quad u \longrightarrow \varphi_\mu \circ {\varphi_\lambda}^{-1} \circ u
\end{align*}
\end{enumerate}
In the next lines, and in particular in the following two propositions, we will show that the operators $\varphi_{\mu \lambda}$ are well defined and continuous. 
\begin{remark}
	Notice that, considering the different components separately, it's enough to focus on operators of the form $\mathcal{S}_n \ni f \rightarrow \phi \circ f \in \mathcal{S}_1$ for $\phi \in C^\infty(\mathbb{R}^n, \mathbb{R})$.
\end{remark}
We will perform some computations in Schwartz spaces. A standard reference for the topics considered in this discussion is the book (Ref. \cite{ReedSimon}) of M. Reed and B. Simon.
Explicitly, the norms and metric on $\mathcal{S}_1$ are of the form
\begin{align*}
\Vert f \Vert_{\alpha, k} := \left\Vert x^\alpha \cdot \frac{d^k f}{d x^k} \right\Vert_\infty \qquad \qquad d(f,g) := \sum_{\alpha, k \in \mathbb{N}} 2^{-\alpha -k}\frac{\Vert f - g \Vert_{\alpha, k}}{1 + \Vert f - g \Vert_{\alpha, k}}
\end{align*}
for $\alpha, k \in \mathbb{N}$.
In the spaces $\mathcal{S}_n$ we consider the product norms and metric.
From the explicit expressions above it's immediate to check that the metric in these spaces is translation-invariant.
Assume we have a smooth map $\phi\in C^\infty(\mathbb{R^n})$ such that $\phi(\underline{0})=0$ and consider the operator $C_\phi : \mathcal{S}_n\rightarrow\mathcal{S}_1$ defined by $C_\phi(f) := \phi \circ f$.
\begin{proposizione}
	$C_\phi$ is well defined as its range is in $\mathcal{S}_1$.
\end{proposizione}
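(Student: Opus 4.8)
The plan is to show that $\phi\circ f$ is smooth (immediate from the chain rule) and then that it satisfies every Schwartz seminorm bound $\Vert \phi\circ f\Vert_{\alpha,k}<\infty$. First I would record the one structural fact I will use repeatedly: since each component $f_i$ is continuous and tends to $0$ at $\pm\infty$, the image $f(\mathbb{R})$ has compact closure $K\subset\mathbb{R}^n$. Consequently $\phi$ and all of its partial derivatives are bounded on $K$, hence bounded along $f$, regardless of how large $x$ is.

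The estimates then split into two regimes according to the order $k$ of differentiation. For $k\ge 1$ I would expand $(\phi\circ f)^{(k)}$ by the Fa\`a di Bruno formula: it is a finite sum of terms $(\partial^\beta\phi)(f(x))\,\prod_r f_{j_r}^{(\ell_r)}(x)$ with $\sum_r \ell_r=k\ge 1$. Each such term carries at least one genuine derivative $f_{j_r}^{(\ell_r)}$ with $\ell_r\ge 1$, which is itself a Schwartz function; multiplying by the bounded factor $(\partial^\beta\phi)(f(x))$, by the remaining bounded factors, and finally by $x^\alpha$, the product stays bounded because the Schwartz factor decays faster than any polynomial. Thus $\Vert \phi\circ f\Vert_{\alpha,k}<\infty$ for every $\alpha$ and every $k\ge 1$.

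The genuinely delicate case is $k=0$, i.e. controlling $x^\alpha\,\phi(f(x))$ itself; here the hypothesis $\phi(\underline 0)=0$ is essential, since for $\phi\equiv 1$ the statement would fail. To exploit it I would invoke Hadamard's lemma (Taylor's formula with integral remainder based at the origin): as $\phi(\underline 0)=0$, one can write $\phi(y)=\sum_{i=1}^n y_i\,g_i(y)$ with $g_i(y):=\int_0^1 (\partial_i\phi)(ty)\,dt\in C^\infty(\mathbb{R}^n)$. Hence $\phi(f(x))=\sum_{i=1}^n f_i(x)\,g_i(f(x))$, which displays the Schwartz component $f_i$ as an explicit factor supplying the missing rapid decay.

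To close, I would verify that each $g_i\circ f$ lies in $C_b^\infty(\mathbb{R})$: it is bounded because $g_i$ is bounded on $K$, and its derivatives of order $\ge 1$ are, again by Fa\`a di Bruno, finite sums of bounded partials of $g_i$ along $f$ times products of bounded derivatives of the $f_j$. Since $\mathcal{S}_1$ is a module over $C_b^\infty(\mathbb{R})$ (a direct consequence of the Leibniz rule applied to the seminorms $\Vert\cdot\Vert_{\alpha,k}$), each $f_i\cdot(g_i\circ f)$ is Schwartz, and so is the finite sum $\phi\circ f$. The main obstacle is precisely this $k=0$ decay: the Hadamard factorization is the device that converts the normalization $\phi(\underline 0)=0$ into quantitative rapid decay, after which everything reduces to the routine fact that a Schwartz function times a bounded function with bounded derivatives is again Schwartz.
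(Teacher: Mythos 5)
Your proposal is correct, and it takes a genuinely different route from the paper at the one point where something nontrivial happens, namely the $k=0$ seminorms. The paper proves $\lim_{x\to\pm\infty}x^\alpha\phi(f(x))=0$ by De L'H\^{o}pital applied to $\phi(f(x))/x^{-\alpha}$ (a $0/0$ form precisely because $\phi(\underline 0)=0$ and $f\to 0$), and then invokes continuity to get boundedness on all of $\mathbb{R}$; you instead use Hadamard's lemma to factor $\phi(y)=\sum_i y_i g_i(y)$ and reduce to the fact that $\mathcal{S}_1$ is a module over $C_b^\infty(\mathbb{R})$. Your factorization makes the role of the hypothesis $\phi(\underline 0)=0$ structural rather than asymptotic, and it yields the slightly stronger statement that $\phi\circ f$ is a finite sum of Schwartz functions times $C_b^\infty$ functions, at the modest cost of invoking Taylor's formula with integral remainder; the paper's limit computation is more elementary but needs the extra continuity step to pass from decay at infinity to a global bound. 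For $k\ge 1$ both arguments run through the multivariate Fa\`a di Bruno formula, though you argue directly that each monomial contains a genuinely Schwartz factor $f_{j_r}^{(\ell_r)}$ with $\ell_r\ge 1$, whereas the paper recycles its $k=0$ case applied to the constant-term-free polynomial $P_\lambda$ evaluated on the derivatives of $f$; these are the same estimate dressed differently. One cosmetic remark: when you bound $g_i\circ f$, the relevant compact set is the union of the segments $[0,1]\cdot K$ rather than $K$ itself, but since that set is still compact the argument is unaffected.
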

\begin{proof}
	We have to prove $\Vert \phi \circ f \Vert_{\alpha, k} < \infty$ for each $\alpha, k \in \mathbb{N}$.
	We start by considering the case without derivatives: assume $k=0$. Consider the following limit, computed using the De L'H\^{o}pital rule
	\begin{align*}
	\lim\limits_{x \to \pm\infty} x^\alpha \phi(f(x)) = \lim\limits_{x \to \pm\infty} \frac{\phi(f(x))}{x^{-\alpha}} = - \frac{1}{\alpha}\sum_{i=1}^{n} \lim\limits_{x \to \pm\infty}  \frac{\partial \phi}{\partial x_i}(f(x)) \frac{f_i^\prime(x)}{x^{-\alpha-1}} = 0
	\end{align*}
	being $f^\prime_i \in \mathcal{S}_1$ and the partial derivatives of $\phi$ bounded in an neighborhood of $\underline{0}$. This allows us to show that $ x^\alpha \phi \circ f$ tends to zero as $x$ goes to infinity. Being this map continuous, it has to be bounded on all the real line, so $\Vert \phi \circ f \Vert_{\alpha, 0} < \infty$ for every $\alpha \in \mathbb{N}$.
	Now assume $k>1$. By the multivariate Faà di Bruno formula (Ref. \cite{ConstSav}, corollary 11) we have the following formula to compute the high order derivatives of the composition:
	\begin{align*}
	\frac{d^k (\phi \circ f)}{d x^k}(x) = \sum_{1 \leq \vert \lambda \vert \leq k}  \frac{\partial^\lambda \phi}{\partial x^\lambda}(f(x)) \cdot P_\lambda[f](x)
	\end{align*}
	where $\lambda \in \mathbb{N}^n$ and $P_\lambda[f]$ is a polynomial function (without constant terms) in the derivatives up to order $\vert \lambda \vert$ of the various components of $f$. We have adopted the notation $\frac{\partial^\lambda}{\partial x^\lambda} := \frac{\partial^{\vert\lambda\vert}}{\partial x_1^{\lambda_1} \,\cdot \cdot \cdot\, \partial x_n^{\lambda_n}}$. Let's highlight one important detail that follows from boundedness of Schwartz functions: $\overline{\text{Im }f}$ is compact in $\mathbb{R}^n$.
	Then we see that by smoothness of $\phi$ we can define $V_\lambda := \sup_{z \in Im(f)} \left \vert \frac{\partial^\lambda \phi}{\partial x^\lambda}(z)\right\vert < \infty$ and then
	\begin{align*}
	\left\Vert x^\alpha \cdot \frac{d^k (\phi \circ f)}{d x^k}(x) \right\Vert_\infty \leq \sum_{1 \leq \vert \lambda \vert \leq k}  V_\lambda \left \Vert x^\alpha P_\lambda[f] \right\Vert_\infty \leq \sum_{1 \leq \vert \lambda \vert \leq k}  V_\lambda \left \Vert P_\lambda[f] \right\Vert_{\alpha, 0} < \infty
	\end{align*}
	The last inequality holds by the case $k=0$ described at the beginning of the proof, choosing $\phi:= P_\lambda$. The fact that this quantity is finite for each $\alpha, k \in \mathbb{N}$ shows that $\phi \circ f \in \mathcal{S}_1$.
\end{proof}
At this point we can focus on continuity.
\begin{proposizione}
	$C_\phi$ is a continuous operator.
\end{proposizione}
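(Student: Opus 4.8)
The plan is to exploit that $\mathcal{S}_n$ and $\mathcal{S}_1$ are metrizable, so that continuity of $C_\phi$ is equivalent to sequential continuity, and that convergence in these Fréchet metrics is equivalent to convergence in each seminorm $\Vert \cdot \Vert_{\alpha,k}$ separately. Thus I would fix $f \in \mathcal{S}_n$ and an arbitrary sequence $f_m \to f$, and reduce the claim to showing $\Vert \phi \circ f_m - \phi \circ f \Vert_{\alpha,k} \to 0$ for every $\alpha, k \in \mathbb{N}$, working componentwise so that $\Vert f_{m,i} - f_i \Vert_{\beta,l} \to 0$ for all $i, \beta, l$ (which in turn makes each $\Vert f_{m,i} \Vert_{\beta,l}$ bounded in $m$).

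The first structural step is to confine all the images to a common compact set. Since $\Vert f_m - f \Vert_\infty \to 0$, the function $f_m$ eventually takes values in any prescribed neighborhood of $\overline{\mathrm{Im}\,f}$; hence there are $m_0 \in \mathbb{N}$ and a compact $K \subset \mathbb{R}^n$ containing $\overline{\mathrm{Im}\,f}$ and $\mathrm{Im}\,f_m$ for every $m \geq m_0$. On $K$ each partial derivative $\partial^\lambda \phi$ is bounded and, being continuous on a compact set, uniformly continuous. These two facts are precisely what allow me to pass uniform estimates through the nonlinearity $\phi$.

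For $k = 0$ I would write, via the fundamental theorem of calculus, $\phi(f_m(x)) - \phi(f(x)) = \sum_i g_{m,i}(x)\,(f_{m,i}(x) - f_i(x))$ with $g_{m,i}(x) := \int_0^1 \frac{\partial \phi}{\partial x_i}(f(x) + t(f_m(x) - f(x)))\,dt$; for $m \geq m_0$ the integrand is evaluated inside $K$, so $\Vert g_{m,i} \Vert_\infty \leq \sup_K \vert \partial \phi / \partial x_i \vert =: C$, whence $\Vert \phi \circ f_m - \phi \circ f \Vert_{\alpha,0} \leq C \sum_i \Vert f_{m,i} - f_i \Vert_{\alpha,0} \to 0$. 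For $k \geq 1$ I would apply the Faà di Bruno expansion already used in the previous proposition to both $\phi \circ f_m$ and $\phi \circ f$, reducing the difference to a finite sum of terms $\partial^\lambda\phi(f_m)\,P_\lambda[f_m] - \partial^\lambda\phi(f)\,P_\lambda[f]$. Splitting each as $(\partial^\lambda\phi(f_m) - \partial^\lambda\phi(f))\,P_\lambda[f_m] + \partial^\lambda\phi(f)\,(P_\lambda[f_m] - P_\lambda[f])$ and multiplying by $x^\alpha$, the first piece is bounded by $\Vert \partial^\lambda\phi(f_m) - \partial^\lambda\phi(f) \Vert_\infty \cdot \Vert P_\lambda[f_m] \Vert_{\alpha,0}$, where the first factor tends to $0$ by uniform continuity of $\partial^\lambda\phi$ on $K$ together with $\Vert f_m - f \Vert_\infty \to 0$; the second piece is bounded by $\sup_K \vert \partial^\lambda\phi \vert \cdot \Vert P_\lambda[f_m] - P_\lambda[f] \Vert_{\alpha,0}$.

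The main obstacle is the last estimate: I must show that $\Vert P_\lambda[f_m] \Vert_{\alpha,0}$ is bounded in $m$ and that $\Vert P_\lambda[f_m] - P_\lambda[f] \Vert_{\alpha,0} \to 0$. Since each $P_\lambda$ is a polynomial without constant term in the derivatives of the components of its argument, this reduces to the continuity of the algebra operations (sums, products, differentiation) in Schwartz space with respect to the weighted sup-norms — concretely, the estimate $\Vert uv \Vert_{\alpha,0} \leq \Vert u \Vert_{\alpha,0}\,\Vert v \Vert_{0,0}$ applied to each monomial (distributing the weight $x^\alpha$ onto a single factor) and the telescoping identity $\prod_j u_{m,j} - \prod_j u_j = \sum_j (\prod_{l<j} u_{m,l})(u_{m,j} - u_j)(\prod_{l>j} u_l)$. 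Feeding the componentwise convergences $\Vert f_{m,i} - f_i \Vert_{\beta,l} \to 0$, and the uniform bounds they entail, through these two identities yields simultaneously the uniform boundedness of $\Vert P_\lambda[f_m] \Vert_{\alpha,0}$ and the vanishing of $\Vert P_\lambda[f_m] - P_\lambda[f] \Vert_{\alpha,0}$, which closes the argument.
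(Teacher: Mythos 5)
Your proposal is correct and follows essentially the same route as the paper: reduce to the seminorms $\Vert\cdot\Vert_{\alpha,k}$, handle $k=0$ by a mean-value/Lipschitz estimate for $\phi$ on a compact neighbourhood of $\overline{\mathrm{Im}\,f}$, and handle $k\geq 1$ by the Fa\`a di Bruno expansion with the same splitting into $(\partial^\lambda\phi\circ f_m-\partial^\lambda\phi\circ f)P_\lambda[f_m]$ plus $(\partial^\lambda\phi\circ f)(P_\lambda[f_m]-P_\lambda[f])$. The only substantive difference is that you spell out, via the product estimate and the telescoping identity, the continuity and uniform boundedness of $f\mapsto P_\lambda[f]$ in $\Vert\cdot\Vert_{\alpha,0}$, a step the paper merely delegates to ``the lemma above''; this makes your write-up slightly more self-contained.
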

\begin{proof}
	First of all recall that in order to prove continuity w.r.t. the Fréchet metrics it's enough to prove it once we fix on the target the topologies induced by the norms $\Vert \cdot \Vert_{\alpha, k}$. We start, as in the previous proposition, by considering the case of the norms $\Vert \cdot \Vert_{\alpha, 0}$.
	Consider $f \in \mathcal{S}_n$. By smoothness of $\phi$ and boundedness of $\overline{\text{Im }f}$ we have that $\exists \delta, K >0$ such that $\forall z, w \in \text{Im }f + B(\underline{0}, \delta)$ then $\vert \phi(w) - \phi(z)\vert < K \sum_{i=1}^n \vert w_i - z_i \vert$, by Lipschitzianity of $\phi$ on compact sets. Then we choose $\sigma \in \mathcal{S}_n$ such that $d(0,\sigma) < \delta $. We have that for every $\alpha \in \mathbb{N}$
	\begin{align}
	\left\Vert x^\alpha \left( \phi \circ (f + \sigma) - \phi \circ f \right) \right\Vert_\infty < 
	K \left\Vert x^\alpha \sum_{i=1}^n \vert \sigma_i \vert \right\Vert_\infty \leq
	K \sum_{i=1}^n \Vert\sigma_i \Vert_{\alpha, 0} < K 2^\alpha \frac{\delta}{1-\delta}
	\end{align}
	We can now consider the cases with $k>0$.
	For each $\epsilon > 0$ and $f, \sigma \in \mathcal{S}_n$, using of the Faà di Bruno formula written above we obtain the estimate 
	\begin{align*}
	&\left\Vert x^\alpha \left(\frac{d^k (\phi \circ (f+\sigma))}{d x^k} - \frac{d^k (\phi \circ f)}{d x^k} \right) \right\Vert_\infty\\ \leq
	& \sum_{1 \leq \vert \lambda \vert \leq k} \left\Vert x^\alpha \left(\frac{\partial^\lambda \phi}{\partial x^\lambda} \circ (f + \sigma) - \frac{\partial^\lambda \phi}{\partial x^\lambda}\circ f \right) P_\lambda[f+\sigma] \right\Vert_\infty +
	\sum_{1 \leq \vert \lambda \vert \leq k} \left\Vert x^\alpha \left(\frac{\partial^\lambda \phi}{\partial x^\lambda}\circ f \right) \left(P_\lambda[f+\sigma] - P_\lambda[f] \right) \right\Vert_\infty
	\end{align*}
	By the usual argument involving relative compactness of $\text{Im }f$ and smoothness of $\phi$ we can see that there is a $\delta>0$ such that whenever $d(0,\sigma) < \delta$ then $\left\Vert \frac{\partial^\lambda \phi}{\partial x^\lambda}\circ(f+\sigma) - \frac{\partial^\lambda \phi}{\partial x^\lambda}\circ f \right\Vert_\infty \leq \epsilon$.
	Moreover the lemma above ensures that, up to picking a smaller $\delta$, we can assume
	$\Vert P_\lambda[f + \sigma] - P_\lambda[f] \Vert_{\alpha,0} < \epsilon$ for each $\sigma$ such that $d(f,\sigma)<\delta$.
	Define 
	\begin{align*}
	V_\lambda := \left\Vert \frac{\partial^\lambda \phi}{\partial x^\lambda}\circ f \right\Vert_\infty + \epsilon \qquad \qquad W_\lambda := \Vert P_\lambda[f] \Vert_{\alpha,0} + \epsilon
	\end{align*}
	Then the estimate above is smaller or equal than
	\begin{align*}
	\sum_{1 \leq \vert \lambda \vert \leq k} \epsilon \left\Vert P_\lambda[f+\sigma] \right\Vert_{\alpha, 0} +
	\sum_{1 \leq \vert \lambda \vert \leq k} V_\lambda \left\Vert P_\lambda[f+\sigma] - P_\lambda[f] \right\Vert_{\alpha, 0} 
	\leq \epsilon \sum_{1 \leq \vert \lambda \vert \leq k}  (W_\lambda + V\lambda) 
	\end{align*}
	which proves continuity.
\end{proof}
\begin{remark}
	Clearly the operator $\varphi_{ \mu \lambda}$ is bijective with inverse $\varphi_{\lambda \mu}$, so we have a family of homeomorphisms. Moreover, these maps satisfy the cocycle conditions: for each triplet $\lambda, \mu, \kappa \in \Lambda$  we have $\varphi_{\lambda \lambda} = \mathbb{1}_{\mathcal{S}^\lambda}$ and the following diagram commutes
	\begin{center}
		\begin{tikzpicture}
		\matrix (m) [matrix of math nodes, nodes in empty cells,row sep=3em,column sep=4em,minimum width=2em]
		{
			& \mathcal{S}^{\lambda \mu} \cap \mathcal{S}^{\lambda \kappa} &  \\
			\mathcal{S}^{\mu \lambda} \cap \mathcal{S}^{\mu \kappa} &  & \mathcal{S}^{\kappa \mu} \cap \mathcal{S}^{\kappa \lambda} \\};
		\path[-stealth]
		(m-1-2) edge node [above] {$\varphi_{ \mu \lambda}$} (m-2-1)
		(m-1-2) edge node [above] {$\varphi_{ \kappa \lambda}$} (m-2-3)
		(m-2-1) edge node [below] {$\varphi_{\kappa \mu}$} (m-2-3);
		\end{tikzpicture}
	\end{center}
To see this, just write $\varphi_{\kappa \mu} \circ \varphi_{\mu \lambda} = \varphi_k \circ \varphi_\mu^{-1} \circ \varphi_\mu \circ \varphi_\lambda^{-1} = \varphi_\kappa \circ \varphi_\lambda^{-1} = \varphi_{\kappa \lambda}$. 
\end{remark}
\begin{lemma}
	For every $\lambda, \mu \in \Lambda$ and $u \in \mathcal{S}^\lambda, v \in \mathcal{S}^\mu$, at least one of the two conditions (\ref{Hausdorff}) holds true.
\end{lemma}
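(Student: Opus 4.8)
The plan is to translate the two abstract alternatives of (\ref{Hausdorff}) into statements about the geometry of the loops in $M$ determined by $u$ and $v$, and then to dispose of the whole problem by a single dichotomy on whether the $u$-loop stays inside $U_\mu$. To set up, I would attach to $u \in \mathcal{S}^\lambda$ the loop $\gamma_u := \varphi_\lambda^{-1} \circ u : \mathbb{R} \to U_\lambda$ and the set $K_u := \overline{\gamma_u(\mathbb{R})}$, and likewise $\gamma_v, K_v$ for $v$. Since $u$ is Schwartz and $\varphi_\lambda(y) = 0$, the value $\gamma_u(x)$ tends to $y$ as $x \to \pm\infty$, so $K_u = \gamma_u(\mathbb{R}) \cup \{y\}$ is compact, connected and contains $y$, and the same holds for $K_v$. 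The key translation, valid because every chart of $\mathcal{A}$ contains $y$, is that $u \in \mathcal{S}^{\lambda\kappa}$ is equivalent to $K_u \subseteq U_\kappa$, and correspondingly $v \in \mathcal{S}^{\mu\kappa} \iff K_v \subseteq U_\kappa$; also recall $\varphi_{\kappa\lambda}(w) = \varphi_\kappa \circ \varphi_\lambda^{-1} \circ w$ and $\mathcal{O}_{\lambda\kappa} = \mathcal{S}^{\lambda\kappa}$.

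The first case is immediate. If $K_u \subseteq U_\mu$, then since always $K_v \subseteq U_\mu$ (as $v \in \mathcal{S}^\mu$), both loops lie in the single chart $U_\mu$, so the choice $\kappa := \mu$ gives $u \in \mathcal{S}^{\lambda\mu}$ and $v \in \mathcal{S}^{\mu\mu} = \mathcal{S}^\mu$, which is exactly the first alternative of (\ref{Hausdorff}).

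The second case, $K_u \not\subseteq U_\mu$, is the heart of the argument and where I expect the only real obstacle. Fixing a point $p \in K_u \setminus U_\mu$, I aim to realize the second alternative with $\kappa_1 := \lambda$, so that $\varphi_{\kappa_1 \lambda}(u) = u$. The delicate point is that taking $\kappa_2 = \mu$ need not work: the loop $\gamma_u$ may leave $U_\mu$ only by grazing $\partial U_\mu$, which would place $u$ precisely on $\partial \mathcal{S}^{\lambda\mu}$ (this is the loop-space avatar of the line-with-two-origins pathology that the conditions are designed to exclude). To sidestep this I shrink the target chart. Since $\varphi_\mu(K_v)$ is a compact, connected subset of the open set $\tilde{U}_\mu$ containing $0$, I choose a bounded connected open $\tilde{V}$ with $\varphi_\mu(K_v) \subseteq \tilde{V}$ and $\overline{\tilde{V}} \subseteq \tilde{U}_\mu$, and set $U' := \varphi_\mu^{-1}(\tilde{V})$ equipped with the restriction of $\varphi_\mu$. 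This is a connected chart through $y$ with $\varphi_\mu(y) = 0$, hence a member of the maximal atlas $\mathcal{A}$, which I name $\kappa_2$. By construction $K_v \subseteq U'$, so $v \in \mathcal{S}^{\mu\kappa_2}$, while $\overline{U'} = \varphi_\mu^{-1}(\overline{\tilde{V}})$ is compact and contained in $U_\mu$; therefore $p \notin \overline{U'} \supseteq \overline{U_\lambda \cap U_{\kappa_2}}$.

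Finally I would cash this in. Writing $p = \gamma_u(x_0)$, the relation $p \notin \overline{U_\lambda \cap U_{\kappa_2}}$ together with $p \in U_\lambda$ and the homeomorphism $\varphi_\lambda$ shows that $u(x_0) = \varphi_\lambda(p)$ lies outside the closure of $\varphi_\lambda(U_\lambda \cap U_{\kappa_2})$. Consequently there is an entire sup-norm neighborhood of $u$ in $\mathcal{S}^\lambda$ whose members $w$ still satisfy $w(x_0) \notin \varphi_\lambda(U_\lambda \cap U_{\kappa_2})$, i.e. $w \notin \mathcal{S}^{\lambda\kappa_2}$; thus $u \notin \overline{\mathcal{S}^{\lambda\kappa_2}}$ and in particular $\varphi_{\kappa_1 \lambda}(u) = u \notin \partial \mathcal{S}^{\kappa_1\kappa_2}$. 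Combined with $u \in \mathcal{S}^{\lambda\kappa_1} = \mathcal{S}^\lambda$ and $v \in \mathcal{S}^{\mu\kappa_2}$, this is precisely the second alternative of (\ref{Hausdorff}), so in either case one of the two conditions holds, as required.
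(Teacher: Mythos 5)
Your proof is correct and follows essentially the same route as the paper: the key step in both is to shrink $U_\mu$ to a connected chart $U_{\kappa_2}$ of the pointed atlas containing $\overline{\mathrm{Im}(\varphi_\mu^{-1}\circ v)}$ and relatively compact in $U_\mu$, and then to use the fact that the $u$-loop leaves $U_\mu$ to keep $u$ away from $\overline{\mathcal{S}^{\lambda\kappa_2}}$. Your dichotomy on whether $\gamma_u(\mathbb{R})\subseteq U_\mu$ organizes the cases more cleanly than the paper's analysis of when $u\in\partial\mathcal{S}^{\lambda\mu}$, and you spell out the sup-norm separation argument that the paper leaves implicit, but the underlying idea is the same.
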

\begin{proof}
	Define $\tilde{u}:=\varphi_\lambda^{-1}\circ x$,  $\tilde{v}:=\varphi_\mu^{-1}\circ y$. The two conditions (\ref{Hausdorff}) are rephrased in our context as:
	\begin{align*}
	&\exists \kappa \in \Lambda \quad : \quad Im(\tilde{u}),Im(\tilde{v}) \subset U_\kappa\\ 
	&\exists \kappa_1, \kappa_2 \in I \quad : \quad Im(\tilde{u}) \subset U_{\kappa_1}, Im(\tilde{v}) \subset U_{\kappa_2} \text{  and  } \, \varphi_{k_1} \circ \tilde{u} \notin \partial \mathcal{S}^{\kappa_1 \kappa_2} \subset \mathcal{S}^{\kappa_1}
	\end{align*}
	Assume that the first one doesn't hold. We can clearly assume $\tilde{U}_\lambda$ and $\tilde{U}_\mu$ are bounded in $\mathbb{R}^n$. Chosen $(\kappa_1, \kappa_2) := (\lambda, \mu)$, being $u$ in the boundary $\partial \mathcal{S}^{\lambda \mu}$ we have that for every $\epsilon>0$ there is $z \in \mathcal{S}^{\lambda \mu}$ such that $d(x, z) < \epsilon$. In particular, this has to hold if we replace the metric $d$ with the uniform metric. Then we obtain that $Im(x)$ is contained in $\overline{\tilde{U}_\lambda \cap \varphi_\lambda \circ \varphi_\mu^{-1}(\tilde{U}_\mu)}$ . So we have that $Im(\tilde{x}) \subset \overline{U_\mu}$ and obviously $Im(\tilde{x}) \cap \partial U_\mu \neq \emptyset$, otherwise the first condition would hold for $\kappa := \mu$. By compactness of $Im(\tilde{v})$ we can shrink $U_\mu$ obtaining an open neighborhood $W$ of $Im(\tilde{v})$ that is relatively compact in $U_\mu$. Then $(W, \varphi_\mu)$ is an element of $\mathcal{A}$, let's say the one corresponding to the index $\nu \in \Lambda$ and $(\kappa_1, \kappa_2) := (\lambda, \nu)$ satisfy the second condition.
\end{proof}
This allows us to build a gluing of this data, obtaining a Fréchet manifold by the proposition above. Explicitly, we have the manifold:
\begin{align*}
M^{\star}_y = \bigslant{\bigsqcup_{\lambda \in \Lambda} \mathcal{S}^\lambda}{\sim}
\end{align*}
where $(u)_\lambda \sim (w)_\mu$ if and only  if $u \in \mathcal{S}^{\lambda \mu}$, $w \in \mathcal{S}^{\mu \lambda}$ and $w = \varphi_\mu \circ \varphi_\lambda^{-1}(u)$.
The atlas has as open sets the images of the maps $\pi^\mu: \mathcal{S}^\mu \hookrightarrow \bigsqcup_{\lambda \in \Lambda} \mathcal{S}^\lambda \rightarrow \bigslant{\bigsqcup_{\lambda \in \Lambda} \mathcal{S}^\lambda}{\sim}$ and as maps the functions $\phi_\mu[(f)_\mu] = f$.

Intuitively, our definition of "global functional" over $M$ will be the one of a function on the disjoint union defined above that passes to the quotient by the relation $\sim$.
\begin{definizione}
A map $\tilde{F}: M^\star_y \rightarrow \mathbb{R}$ is called \emph{local functional on $(M,y)$} if, on the charts of the atlas for the loop space defined above, it is represented by a family of local functionals $\lbrace F_\lambda \in \mathcal{L}^\lambda\rbrace_{\lambda \in \Lambda}$.
\emph{WNL functionals on M} are defined in a completely analogous way. The spaces of these functionals will be denoted by $\mathcal{L}(M,y)$ and $\mathcal{W}(M,y)$ respectively.
\end{definizione}
In what follows we will identify these functionals with the families parameterized by $\Lambda$ that define them.
\begin{remark}
Notice that $\mathcal{L}(M,y)$ and $\mathcal{W}(M,y)$ have a natural structure of $\mathbb{R}$-linear spaces. The operations are defined at the level of the families that define the functionals.
\end{remark}
\begin{definizione}
We will call \emph{WNL Poisson bracket over $(M,y)$} a map
\begin{align*}
\lbrace \cdot , \cdot \rbrace : \mathcal{W}(M,y) \times \mathcal{W}(M,y) \rightarrow \mathcal{W}(M,y)
\end{align*}
which is bilinear and satisfies the following two identities:
\begin{align*}
&\lbrace F , G \rbrace = - \lbrace G , F \rbrace\\
&\lbrace \lbrace F , G \rbrace , H \rbrace + \lbrace \lbrace G , H \rbrace , F \rbrace + \lbrace \lbrace H , F \rbrace , G \rbrace = 0
\end{align*}
for each $F$, $G$, $H \in \mathcal{W}(M,y)$.
Moreover, we require it to have the form
\begin{equation}
\label{poisson}
\lbrace F , G \rbrace_\lambda (u) := \int_\mathbb{R}{\dfrac{\delta F_\lambda}{\delta u_i (x)} \left( P^{i j}_\lambda(u) \dfrac{\delta G_\lambda}{\delta u_j}\right) (x) dx}
\end{equation}
where
\begin{itemize}
\item $P^{i j}_\lambda(u)$, given $u \in \mathcal{S}^\lambda$, is the operator $\,$ $C_b^\infty(\mathbb{R}, \mathbb{R}^n) \rightarrow C_b^\infty(\mathbb{R}, \mathbb{R}^n)$ defined by 
\begin{align*}
P^{i j}_\lambda(u) &:= {g_\lambda(u)}^{i j} \dfrac{d}{dx} - {g_\lambda (u)}^{i s} {\Gamma_\lambda (u)}^j_{s k} u_x^k + {w_\lambda(u)}_k^i u_x^k d^{-1} {w_\lambda(u)}_l^j u_x^l
\end{align*} 
\item $g_\lambda, w_\lambda \in C^\infty(\tilde{U}_\lambda, \mathbb{R}^{n \times n})$, $\Gamma_\lambda \in C^\infty(\tilde{U}_\lambda, \mathbb{R}^{n \times n \times n})$ are such that the matrix $g_\lambda(u_1, ..., u_n)$ is in $GL_n(\mathbb{R})$ for each $(u_1, ..., u_n) \in \tilde{U}_\lambda$.
\end{itemize}
for each $\lambda \in \Lambda$.
\end{definizione}
\begin{remark}
The term "weakly nonlocal" comes from the work (Ref.~\cite{MaltNov}) of A.Ya. Maltsev and S.P. Novikov.
\end{remark}
\begin{remark}
In classical Poisson geometry there is an additional condition that has to be satisfied by Poisson brackets, namely the Leibniz formula:
\begin{equation*}
    \lbrace F  G, H \rbrace = F \lbrace G, H \rbrace  +  G \lbrace F , H \rbrace \quad \quad \quad \quad \forall F,G,H \in C^\infty(M)
\end{equation*}
By Remark \ref{noproduct} we see that there is no sense in requiring the validity of some analogous identity at the level of our functionals, as we don't have a well defined product between such objects (the theory of Hamiltonian PDEs arising from these infinite dimensional Poisson structures is not affected by the loss of this identity, see for example (Ref. \cite{Olver}) page 425).
\end{remark}

First of all, being the variational derivative of a WNL functional bounded, one obtains that the integrals in (\ref{poisson}) are convergent. This means that fixed $F,G \in \mathcal{W}(M,y)$ our bracket gives a well defined map
\begin{align*}
\lbrace F , G \rbrace : \bigsqcup_{\lambda \in \Lambda} \mathcal{S}^\lambda \rightarrow \mathbb{R}
\end{align*} 
Now we have to check which conditions on the elements ($g$, $\Gamma$, $w$) defining the bracket allow us to factor this map  through the projection induced by $\sim$.
In the case of these brackets, the following well known geometric characterization holds:
\begin{proposizione}
A family of maps of the form (\ref{poisson}) defines a map $\mathcal{W}(M,y) \times \mathcal{W}(M,y) \rightarrow \mathcal{W}(M,y)$ if and only if the families $\left\lbrace g_\lambda, \Gamma_\lambda, w_\lambda \right\rbrace_{\lambda \in \Lambda}$ define on M a (2,0) tensor field, a connection and a (1,1) tensor field respectively.
\end{proposizione}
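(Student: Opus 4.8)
The plan is to unpack the phrase ``defines a map into $\mathcal{W}(M,y)$'' into two separate requirements: first, that for each chart $\lambda$ the expression $\{F,G\}_\lambda$ is itself a WNL functional on $\tilde U_\lambda$, and second --- the substantive point --- that the family $\{\{F,G\}_\lambda\}_{\lambda\in\Lambda}$ is compatible on overlaps and hence descends to a single functional on $M^\star_y$. The first requirement holds unconditionally: by the corollary on variational derivatives, $\delta F_\lambda/\delta u^i$ and $\delta G_\lambda/\delta u^j$ lie in $\mathcal{D}_n$, and applying $P^{ij}_\lambda(u)$ --- which only differentiates, multiplies by local densities and inserts one further $d^{-1}$ --- keeps the integrand in $\mathcal{D}_n$; so $\{F,G\}_\lambda\in\mathcal{W}^\lambda$ with no condition on $g,\Gamma,w$, using closure of $\mathcal{D}_n$ under $d/dx$, products and $d^{-1}$. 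The entire content of the proposition is therefore the equivalence between the descent condition and the tensorial transformation laws.

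Write $\Phi:=\varphi_\mu\circ\varphi_\lambda^{-1}$ for the transition map on the overlap, $\tilde u:=\Phi\circ u$, and $J^a_i:=\partial_i\Phi^a$ for its Jacobian, with $a,b,c,d$ labelling $\mu$-components and $i,j,k,l,m,s$ labelling $\lambda$-components. The first step I would carry out is the transformation rule for the variational derivative. Since the transition map acts \emph{pointwise} on loops, its Gateaux differential is multiplication by the Jacobian, $d_G\varphi_{\mu\lambda}(u)[h]=J\cdot h$; combining the chain rule for G-differentials with the Euler--Lagrange formula and the compatibility relation $F_\lambda=F_\mu\circ\varphi_{\mu\lambda}$ gives at once
\[
\frac{\delta F_\lambda}{\delta u^i}=J^a_i\,\frac{\delta F_\mu}{\delta\tilde u^a},
\]
so that the variational derivative is a covector under point transformations. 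Descent of the bracket is exactly the requirement that $\{F,G\}_\lambda(u)=\{F,G\}_\mu(\tilde u)$ for all $F,G$ and all $u\in\mathcal{S}^{\lambda\mu}$.

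Next I would turn this integral identity into a pointwise operator identity. Substituting the covector rule for both $\delta F/\delta u$ and $\delta G/\delta u$, the equality reads $\int J^a_i\,\tilde\xi_a\,\big(P^{ij}_\lambda(u)[J^b_j\tilde\eta_b]\big)\,dx=\int\tilde\xi_a\,\big(P^{ab}_\mu(\tilde u)\tilde\eta_b\big)\,dx$, where $\tilde\xi_a=\delta F_\mu/\delta\tilde u^a$ and $\tilde\eta_b=\delta G_\mu/\delta\tilde u^b$. Because densities may depend explicitly on $x$, at a fixed loop $u$ I can realize $\tilde\xi_a=\rho(x)e_a$ for an arbitrary Schwartz weight $\rho$ and any basis covector, and likewise for $\tilde\eta$; since $\tilde\xi$ is only multiplied (never differentiated) this forces the operator identity
\[
J^a_i\,\big(P^{ij}_\lambda(u)[J^b_j\,\cdot\,]\big)=P^{ab}_\mu(\tilde u).
\]
Here one uses that the coefficients $g,\Gamma,w,J$ depend only on the value $u(x_0)$, whereas $u_x(x_0)$ and the test functions can be prescribed independently, so coefficients of $u_x^k$ and of $\tilde\eta_b$ may be compared separately; moreover the $d^{-1}$ part is the unique nonlocal piece and so must match on its own, the operator $d^{-1}$ itself being untouched by the pointwise change of coordinates.

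Finally I would expand this identity into its three constituent pieces. Matching the coefficient of $\partial_x\tilde\eta_b$ (the only top-order term) yields $g_\mu^{ab}=J^a_iJ^b_j\,g^{ij}_\lambda$, i.e. $g$ is a $(2,0)$ tensor. Matching the nonlocal term reduces, via $J^a_i w^i_k u_x^k=w_\mu{}^a_c\tilde u_x^c$ applied to both the outer factor and the argument of $d^{-1}$, to $w$ being a $(1,1)$ tensor. The remaining and genuinely delicate matching is of the zeroth-order local terms: when $\partial_x$ in $g^{ij}\partial_x$ falls on the factor $J^b_j$ it produces an inhomogeneous term $g^{ij}\,\partial_k\partial_j\Phi^b\,u_x^k$, which must combine with the $\Gamma$-term to reproduce $-g_\mu^{ac}\Gamma_\mu{}^b_{cd}\tilde u_x^d$. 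Stripping $u_x^k$ and contracting with the inverse metric $g_{im}$ (which exists since $g_\lambda\in GL_n(\mathbb{R})$) reduces this to
\[
\partial_m\Phi^c\,\partial_k\Phi^d\,\Gamma_\mu{}^b_{cd}=\Gamma^j_{mk}\,\partial_j\Phi^b-\partial_m\partial_k\Phi^b,
\]
which, after multiplication by the inverse Jacobians and use of the classical identity got by differentiating $\partial_i\Phi^a\,\partial_a\Psi^j=\delta^j_i$ (with $\Psi:=\Phi^{-1}$) to rewrite the $-\partial_m\partial_k\Phi^b$ term, is precisely the non-tensorial Christoffel transformation law of a linear connection. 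I expect this connection term --- the appearance and exact cancellation of the second derivatives of the transition map --- to be the main obstacle; the metric and $(1,1)$-tensor parts are purely tensorial and fall out immediately. The converse implication is the same computation read backwards: assuming the three transformation laws, each of the three matchings holds, so $\{F,G\}_\lambda$ and $\{F,G\}_\mu$ agree on overlaps and the bracket descends to $\mathcal{W}(M,y)$.
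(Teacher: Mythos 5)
The paper does not actually prove this proposition: it declares the proof ``just a computation'' and defers to Ferapontov, so there is no internal argument to compare yours against. What you have written is the standard computation the paper is alluding to, and its skeleton is sound: the covector law $\delta F_\lambda/\delta u^i = J^a_i\,\delta F_\mu/\delta\tilde u^a$ does follow from the chain rule for the pointwise transition map; arbitrariness of the covectors turns the descent condition into the operator identity $J^a_i\,P^{ij}_\lambda(u)[J^b_j\,\cdot\,]=P^{ab}_\mu(\tilde u)$; and matching the first-order, zeroth-order and nonlocal parts yields exactly the $(2,0)$-tensor, connection and $(1,1)$-tensor laws, with the second derivatives of $\Phi$ absorbed by the inhomogeneous Christoffel transformation rule as you indicate. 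Two steps deserve more care than you give them. First, to pass from the integral identity to the pointwise operator identity you must know that sufficiently arbitrary covector fields along a loop arise as variational derivatives; this is precisely the paper's later device of the linear functionals $\int\alpha_i(x)u^i\,dx$ in Lemma~\ref{linearlemma}, and you should invoke that construction explicitly rather than the vaguer ``realize $\tilde\xi_a=\rho(x)e_a$''. Second, your opening claim that $\{F,G\}_\lambda\in\mathcal{W}^\lambda$ holds unconditionally because $\delta F_\lambda/\delta u^i\in\mathcal{D}_n$ is not literally true in this framework: elements of $\mathcal{D}_n$ map $\mathcal{S}_n$ into $\mathcal{S}_1$, whereas the corollary only gives that variational derivatives are bounded, i.e.\ land in $C_b^\infty$; one must therefore argue separately that the full integrand $\frac{\delta F_\lambda}{\delta u_i}\,P^{ij}_\lambda(u)\,\frac{\delta G_\lambda}{\delta u_j}$ decays and has the inductive form required of a WNL density (the factors $u_x^k$ supply decay in the $\Gamma$- and $w$-terms, but the $g^{ij}\frac{d}{dx}$ term needs its own argument). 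The paper glosses this point as well, so it is not a defect peculiar to your proof, but it is the one place where ``defines a map into $\mathcal{W}(M,y)$'' is not settled by the transformation laws alone.
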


The proof of this result is just a computation and is therefore omitted (see for example (Ref.~\cite{Ferapontov})).
This result gives a first hint for studying the dependence of these structures from the base point $y$:
\begin{corollario}\label{bijection}
Given $y \in M$ and consider the set $\mathcal{F}_y$ of functionals $\mathcal{W}(M,y)^2 \rightarrow \mathcal{W}(M,y)$ of local form (\ref{poisson}). The result above establishes the existence of a bijection $\mathcal{F}_y \rightarrow \mathcal{F}_z$ $\forall y,z \in M$, that correlates operators defined by the same tensor fields $g, w$ and connection $\Gamma$.
\end{corollario}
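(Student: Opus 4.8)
The plan is to show that both $\mathcal{F}_y$ and $\mathcal{F}_z$ are parametrized by one and the same set of geometric data on $M$, a set that carries no reference to the base point; the desired bijection is then obtained by composing the two parametrizations. Let $\mathcal{G}$ denote the set of triples $(g, \Gamma, w)$, where $g$ is an everywhere nondegenerate $(2,0)$ tensor field on $M$, $\Gamma$ is an affine connection, and $w$ is a $(1,1)$ tensor field. This set is defined purely in terms of the smooth structure of $M$ and is manifestly independent of the choice of $y$.

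First I would fix $y$ and construct a map $\Phi_y : \mathcal{G} \to \mathcal{F}_y$. Given $(g,\Gamma,w) \in \mathcal{G}$, one reads off its components $(g_\lambda, \Gamma_\lambda, w_\lambda)$ in each chart $(U_\lambda, \varphi_\lambda)$ of the pointed atlas $\mathcal{A}$ for $(M,y)$ and plugs them into formula (\ref{poisson}). By the \emph{if} direction of the previous proposition, the fact that these local data come from genuine global objects guarantees that the resulting family of maps of local form glues to a well-defined element of $\mathcal{F}_y$; this is $\Phi_y(g,\Gamma,w)$. Conversely, any element of $\mathcal{F}_y$ is by definition a family of maps of local form (\ref{poisson}), hence determines local coefficients $(g_\lambda, \Gamma_\lambda, w_\lambda)$ on the charts of $\mathcal{A}$, and the \emph{only if} direction says these glue to a triple in $\mathcal{G}$, giving a candidate inverse. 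Injectivity of $\Phi_y$ amounts to recovering the triple from the operator $P^{i j}_\lambda(u)$ chartwise: the first-order differential part recovers $g_\lambda$, nondegeneracy of $g_\lambda$ together with the local order-zero part recovers $\Gamma_\lambda$, and the nonlocal part recovers $w_\lambda$ (up to the ambiguity $w_\lambda \mapsto -w_\lambda$, which leaves the operator unchanged; one absorbs it once and for all by passing to the corresponding quotient of $\mathcal{G}$, and since this identification is dictated solely by the local formula (\ref{poisson}) it is the same for every base point).

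The point that requires care — and the only nontrivial ingredient — is that restricting attention to the pointed atlas $\mathcal{A}$, whose charts are constrained to contain $y$ and to send it to the origin, loses no geometric information: a family $(g_\lambda, \Gamma_\lambda, w_\lambda)$ transforming correctly on the overlaps of $\mathcal{A}$ must in fact define a global tensor field, connection, and tensor field on all of $M$. This is exactly where Proposition \ref{HomManifolds} enters. It guarantees that the charts of $\mathcal{A}$ still cover $M$ (every point lies in a connected chart also containing $y$, which we may normalise so that $\varphi(y)=0$), and since the transition maps of $\mathcal{A}$ are the ambient smooth transition maps of $M$, the standard local-to-global gluing for tensors and connections applies verbatim. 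Hence $\Phi_y$ descends to a bijection onto $\mathcal{F}_y$, and the same construction yields a bijection $\Phi_z : \mathcal{G} \to \mathcal{F}_z$.

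Finally, since $\mathcal{G}$ does not depend on the base point, the composition $\Phi_z \circ \Phi_y^{-1} : \mathcal{F}_y \to \mathcal{F}_z$ is a bijection, and by construction it carries the bracket determined by $(g, \Gamma, w)$ over $(M,y)$ to the bracket determined by the \emph{same} triple $(g, \Gamma, w)$ over $(M,z)$, which is precisely the asserted correspondence.
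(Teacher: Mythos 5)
Your proposal is correct and follows exactly the route the paper intends: the previous proposition identifies both $\mathcal{F}_y$ and $\mathcal{F}_z$ with base-point-independent geometric data $(g,\Gamma,w)$ on $M$ (using Proposition \ref{HomManifolds} to ensure the pointed atlas still covers $M$), and the bijection is the composition of these identifications. The paper treats this as immediate and offers no written proof, so your spelled-out version, including the care taken over the $w \mapsto -w$ ambiguity being the same for every base point, is a faithful elaboration of the same argument.
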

\section{FERAPONTOV'S THEOREM}

The next part of this work is devoted to showing how this precise choice of the functional spaces allows us to prove in a simple way this theorem due to Ferapontov \cite{Ferapontov}.
\begin{theorem}
\label{TH}
A bracket of the form (\ref{poisson}) defines a Poisson bracket if and only if its coefficients define on M a pseudometric $g$, its Levi Civita connection $\Gamma$ and the Gauss and Peterson-Codazzi-Mainardi equations hold.
\end{theorem}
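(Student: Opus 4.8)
The plan is to verify separately the two defining properties of a Poisson bracket, skew-symmetry and the Jacobi identity, and to show that each is equivalent to a portion of the asserted geometric conditions. Since both the Poisson property and the geometric conditions are coordinate-invariant---the former because the bracket is a well-defined global object on $M^\star_y$, the latter because $g_\lambda,\Gamma_\lambda,w_\lambda$ assemble into genuine geometric data by the Proposition preceding Corollary \ref{bijection}---it suffices to work in a single fixed chart, where I drop the index $\lambda$. The essential leverage is that, by the Euler-Lagrange formula and the variational-derivative formula following (\ref{nonlocalchains}), the covectors $\xi_i:=\delta F/\delta u_i$ and $\eta_j:=\delta G/\delta u_j$ can be prescribed as essentially arbitrary bounded functions (together with their derivatives) through suitable choices of $F,G\in\mathcal{W}(M,y)$; consequently any identity that must hold after integrating against them collapses to a pointwise identity in the symbol of $P^{ij}$.

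First I would treat skew-symmetry. Substituting the explicit form of $P^{ij}$ into $\lbrace F,G\rbrace+\lbrace G,F\rbrace$ and integrating by parts to move all derivatives onto one factor, the boundary contributions vanish because the relevant products lie in $\mathcal{S}_1$ and, for the nonlocal summand, by property (\ref{secondProperty}) of $d^{-1}$. Matching the coefficient of the top-order term $\xi_i\eta_j'$ forces $g^{ij}=g^{ji}$, so $g$ is a symmetric nondegenerate pseudometric. The remaining first-order local terms then yield $\partial_k g^{ij}+g^{is}\Gamma^j_{sk}+g^{sj}\Gamma^i_{sk}=0$, i.e. the metric-compatibility condition $\nabla_k g^{ij}=0$. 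Finally the nonlocal tail $w^i_k u^k_x\,d^{-1}(w^j_l u^l_x)$ contributes skew-symmetrically precisely when $g_{il}w^l_j=g_{jl}w^l_i$, that is, when $w$ is a $g$-self-adjoint $(1,1)$ tensor.

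Next I would turn to the Jacobi identity, the substantive part. Since $\lbrace F,G\rbrace$ is again a WNL functional, I would first compute its variational derivative by the formulas established above, and then form the Jacobiator $\lbrace\lbrace F,G\rbrace,H\rbrace+\lbrace\lbrace G,H\rbrace,F\rbrace+\lbrace\lbrace H,F\rbrace,G\rbrace$. Organizing the outcome by analytic type, I would separate the purely local contributions (polynomial in the $u^{(k)}$) from the genuinely nonlocal ones (those retaining a factor $d^{-1}$). Vanishing of the first-order local part gives $\Gamma^j_{sk}=\Gamma^j_{ks}$ (zero torsion), so that together with metric compatibility $\Gamma$ is forced to be the Levi-Civita connection of $g$; vanishing of the curvature part gives the Gauss equation $R^{ij}_{kl}=w^i_k w^j_l-w^i_l w^j_k$, identifying the Riemann tensor of $g$ with the antisymmetrized square of $w$. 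Vanishing of the nonlocal part gives the Peterson-Codazzi-Mainardi equation $\nabla_k w^i_j=\nabla_j w^i_k$, the total symmetry of $\nabla w$. The converse follows by reading these equalities backwards: once the geometric conditions hold each block in the Jacobiator cancels and skew-symmetry is immediate, so the bracket is Poisson.

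The hard part will be the bookkeeping in the Jacobi computation, and in particular the nonlocal terms. Two points demand care. First, applying the variational-derivative formula to $\lbrace F,G\rbrace$ produces new $d^{-1}$-chains, and one must confirm that the boundary terms arising when integrating by parts against $\eta$ and the third variational derivative genuinely vanish---again via property (\ref{secondProperty}) and the boundedness of variational derivatives of WNL functionals established earlier. Second, the cancellation that isolates the Gauss and Codazzi equations mixes local and nonlocal pieces; the nonlocal terms can only cancel among themselves, and it is their independent vanishing that cleanly splits the obstruction into the curvature identity and the symmetry of $\nabla w$. Making this separation rigorous---arguing that a local expression and a nonlocal expression summing to zero must each vanish---is exactly where the explicit description of the functional spaces pays off, since by Remark \ref{noproduct} the space of WNL functionals is the quotient $\mathcal{D}_n/d\mathcal{D}_n$, and it is the resulting characterization of which densities represent the zero functional that licenses the term-by-term matching.
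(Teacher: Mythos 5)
Your overall architecture (reduce to functionals whose variational derivatives are essentially arbitrary bounded covectors, then extract pointwise identities in the symbol of $P^{ij}$) matches the paper's, but there are two genuine gaps. First, a computational error in the skew-symmetry step: the nonlocal tail $w^i_k u^k_x\,d^{-1}w^j_l u^l_x$ is \emph{automatically} skew-adjoint, since with $a:=w^i_k u^k_x\xi_i$ and $b:=w^j_l u^l_x\eta_j$ one has $\int_{\mathbb{R}}a\,d^{-1}(b)\,dx+\int_{\mathbb{R}}b\,d^{-1}(a)\,dx=\int_{\mathbb{R}}\bigl(d^{-1}(a)\,d^{-1}(b)\bigr)'\,dx=0$ by property (\ref{secondProperty}). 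So skew-symmetry forces only that $g$ is a symmetric nondegenerate pseudometric compatible with $\Gamma$, as in the paper's intermediate theorem; the self-adjointness $w^i_p g^{pl}=w^l_p g^{pi}$ is equation (\ref{GPC:3}) and has to be extracted from the Jacobi identity, where it appears as the coefficient $m^{ij}$.

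Second, and more seriously, the mechanism you invoke to pass from the integrated Jacobiator to pointwise identities --- that a local expression and a nonlocal expression summing to zero must each vanish, justified by the quotient description $\mathcal{W}_n\simeq\mathcal{D}_n/d\mathcal{D}_n$ --- is not available: the paper never characterizes which WNL densities integrate to zero (Remark \ref{noproduct} mentions the quotient only informally, to explain why products do not descend), and such a characterization is precisely the delicate point you would have to prove. The paper's actual route is concrete and different: restrict to $f\in C^\infty_0(\mathbb{R})$, integrate by parts, and apply the variational lemma to obtain a single pointwise identity (\ref{eq:3}) valid for all $u\in\mathcal{S}(\Omega)$ and all bounded $g,h$; then evaluate at $x=0$ for functions $u$ with prescribed $u(0)$, $u_x(0)$, $u_{xx}(0)$ (and prescribed behaviour at $x=1$ so that $\tilde g(0)$, $\tilde h(0)$ can be made nonzero or zero at will), choosing $g,h$ with prescribed $2$-jets at $0$, so as to isolate the coefficients $m^{ij}$, $d^{ijl}$, $e^{ij}_k$, $b^{ijl}_k$ one at a time. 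Without some such explicit family of test data, your term-by-term matching of local against nonlocal blocks is an unproven assertion, and it is exactly the step the explicit functional spaces were built to make rigorous.
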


Before giving a proof of this result we highlight one of its applications that allows to clarify what happens to the WNL-Poisson structures once we let the base point $y$ vary. The theorem implies that there is a canonical bijection between Poisson structures over different base points.
\begin{corollario}
	The same bijection defined in Corollary \ref{bijection} restricts to a bijection between Poisson brackets over $(M,y)$ and $(M,z)$.
\end{corollario}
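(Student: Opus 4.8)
The plan is to deduce the statement directly from Theorem \ref{TH} together with Corollary \ref{bijection}, exploiting the fact that Ferapontov's characterization of the Poisson property is purely geometric and hence insensitive to the choice of base point.

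First I would fix $y, z \in M$ and recall the bijection $\Phi : \mathcal{F}_y \to \mathcal{F}_z$ furnished by Corollary \ref{bijection}: by construction it sends a bracket over $(M,y)$ with coefficient data $(g, \Gamma, w)$ to the bracket over $(M,z)$ built from the \emph{same} triple $(g, \Gamma, w)$ of tensor fields on $M$. The only things that change under $\Phi$ are the functional space on which the bracket acts, namely $\mathcal{W}(M,y)$ versus $\mathcal{W}(M,z)$, and the normalization of the operator $d^{-1}$; the underlying differential-geometric data are identical.

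Next I would invoke Theorem \ref{TH}. It tells us that a bracket of the form (\ref{poisson}), regardless of the base point, is a Poisson bracket if and only if its coefficients satisfy four conditions: $g$ is a pseudometric on $M$, $\Gamma$ is the Levi--Civita connection of $g$, and the pair $(g, w)$ satisfies the Gauss and Peterson--Codazzi--Mainardi equations. Each of these is a statement about tensor fields on $M$ and involves no reference whatsoever to a marked point. Consequently, for a bracket $B \in \mathcal{F}_y$ with data $(g, \Gamma, w)$, the assertion that $B$ is a Poisson bracket over $(M,y)$ is logically equivalent to the four geometric conditions on $(g, \Gamma, w)$, which in turn is equivalent to $\Phi(B)$ being a Poisson bracket over $(M,z)$. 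This single biconditional already forces $\Phi$ to carry the subset of Poisson brackets inside $\mathcal{F}_y$ bijectively onto the subset of Poisson brackets inside $\mathcal{F}_z$, so its restriction is the desired bijection.

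The only point demanding genuine care — and the place on which the whole argument rests — is the verification that the conditions isolated by Theorem \ref{TH} are base-point-free. This is really the conceptual content of the corollary: although the functional spaces $\mathcal{W}(M,y)$ and the operator $d^{-1}$ depend on $y$, Ferapontov's characterization repackages the Poisson property entirely in terms of the intrinsic Riemannian geometry of $(M, g)$ and the hypersurface data encoded by $w$, quantities that live on $M$ and are blind to the choice of $y$. Once this observation is granted, nothing further is required beyond combining the two earlier results; I do not expect any serious analytic obstacle, since all the hard work has already been absorbed into Theorem \ref{TH} and Corollary \ref{bijection}.
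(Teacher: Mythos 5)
Your proposal is correct and matches the paper's intent exactly: the paper offers no separate proof, presenting the corollary as an immediate consequence of Theorem \ref{TH} (whose characterization of the Poisson property is purely in terms of base-point-free conditions on $(g,\Gamma,w)$) combined with the bijection of Corollary \ref{bijection}, which is precisely the argument you spell out.
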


In order to explicit the independence of Poisson structures form the choice of the base point, we can give the following interpretation of WNL-Poisson brackets over a manifold.
A WNL-Poisson bracket over (M,y) defines a family of WNL-Poisson brackets parameterized by their base point and defined by the same Riemannian objects. We will call such families \textit{WNL-Poisson brackets over M}.
In practice, such brackets are maps
\begin{align*}
	\lbrace \cdot , \cdot \rbrace : \bigsqcup_{y \in M} \mathcal{W}(M,y)^2 \rightarrow \bigsqcup_{y \in M} \mathcal{W}(M,y)
\end{align*}
defined by the commutativity of the following diagram for each $z \in M$ 
\begin{center}
	\begin{tikzpicture}
	\matrix (m) [matrix of math nodes, nodes in empty cells,row sep=3em,column sep=4em,minimum width=2em]
	{
		\bigsqcup_{y \in M} \mathcal{W}(M,y)^2 &  \bigsqcup_{y \in M} \mathcal{W}(M,y) \\
		\mathcal{W}(M,z)^2 &  \mathcal{W}(M,z) \\};
	\path[-stealth]
	(m-2-1) edge node [left] {$j_z$} (m-1-1)
	(m-1-1) edge node [above] {$\lbrace \cdot , \cdot \rbrace$} (m-1-2)
	(m-2-2) edge node [right] {$i_z$} (m-1-2)
	(m-2-1) edge node [below] {$\lbrace \cdot , \cdot \rbrace_z$} (m-2-2);
	\end{tikzpicture}
\end{center}
Where $i_z$ and $j_z$ are the inclusions in the disjoint unions and $\left \lbrace \lbrace \cdot, \cdot \rbrace_z \right \rbrace_{z \in M}$ is a family of WNL-Poisson brackets defined by the same pseudometric and Weingarten operator.\\
Let's now focus on theorem \ref{TH}.
The nature of this topic is local, so we will assume to be working on a fixed $U_\lambda$ without specifying it anymore. We will denote with $\Omega$ the open $\tilde{U}_\lambda$. 
To simplify the notation a bit we'll denote the derivation w.r.t. $x$ with $\prime$.
\begin{lemma}
\label{linearlemma}
Consider a bracket $\lbrace \cdot , \cdot \rbrace$ of the form (\ref{poisson}) and assume the skew-symmetry and the Jacobi identity hold for local functionals of the form
\begin{equation}
\label{linear}
F(u) := \int_\mathbb{R} \alpha_{i}(x) u^i(x) dx
\end{equation}
where $\alpha_{i} \in C^\infty_b(\mathbb{R})$. Then the bracket is a WNL Poisson bracket.
\end{lemma}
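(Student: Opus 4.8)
The plan is to show that skew-symmetry and the Jacobi identity, which are granted only for the linear functionals (\ref{linear}), automatically propagate to all of $\mathcal{W}(M,y)$. The mechanism is that the bracket (\ref{poisson}) detects a functional $F$ only through its variational derivative: for a fixed $u_0 \in \mathcal{S}^\lambda$ the number $\{F,G\}_\lambda(u_0)$ depends on $F$ and $G$ exclusively through the functions $x \mapsto \frac{\delta F}{\delta u_i}(u_0)(x)$ and $x \mapsto \frac{\delta G}{\delta u_j}(u_0)(x)$. By the Corollary on boundedness of the variational derivative these lie in $C_b^\infty(\mathbb{R})$, hence each one is realized as the (constant in $u$) variational derivative of a linear functional of the form (\ref{linear}) with $\alpha_i := \frac{\delta F}{\delta u_i}(u_0)$. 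This matching of first-order data against the class (\ref{linear}) is the engine of the whole argument.

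Skew-symmetry is then immediate. Writing $\{F,G\}_\lambda(u_0) = B_{u_0}(\tfrac{\delta F}{\delta u}(u_0), \tfrac{\delta G}{\delta u}(u_0))$ for the pairing $B_{u_0}(\xi,\eta) := \int_\mathbb{R} \xi_i\,(P^{ij}_\lambda(u_0)\eta_j)\,dx$, the hypothesis asserts exactly that $B_{u_0}$ is antisymmetric on the functions arising as variational derivatives of linear functionals, and that class is all of $C_b^\infty(\mathbb{R})$. Since $\tfrac{\delta F}{\delta u}(u_0)$ and $\tfrac{\delta G}{\delta u}(u_0)$ belong to it, we get $\{F,G\}_\lambda(u_0) = -\{G,F\}_\lambda(u_0)$, and as $u_0$ is arbitrary skew-symmetry holds on all WNL functionals.

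The Jacobi identity is the substantive part, and I expect it to be the main obstacle. The difficulty is that the inner functional $\{F,G\}$ must itself be differentiated, and its variational derivative involves the \emph{second} variational derivatives of $F$ and $G$, data that a linear functional cannot reproduce. The key step is therefore to expand the Jacobiator $J(F,G,H) := \{\{F,G\},H\} + \{\{G,H\},F\} + \{\{H,F\},G\}$ at $u_0$ and to check that every term carrying a second variational derivative of $F$, $G$ or $H$ cancels in the cyclic sum. This is the infinite-dimensional analogue of the classical fact that the Schouten square of a bivector is a tensor: the cancellation is forced purely by the skew-symmetry just established, i.e. by the formal skew-adjointness of $P^{ij}_\lambda(u_0)$. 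The computation is a chain of integrations by parts; the only point going beyond the local Dubrovin--Novikov case is the nonlocal tail $w_k^i u_x^k\, d^{-1}\, w_l^j u_x^l$, whose boundary contributions vanish by the vanishing-at-infinity property (\ref{secondProperty}) of products of $d^{-1}$-primitives.

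Once this cancellation is secured, $J(F,G,H)(u_0)$ is a universal expression built only from the functions $\tfrac{\delta F}{\delta u}(u_0)$, $\tfrac{\delta G}{\delta u}(u_0)$, $\tfrac{\delta H}{\delta u}(u_0)$ together with their $x$-derivatives, and from the fixed coefficients $g_\lambda,\Gamma_\lambda,w_\lambda$ and their $u$-derivatives at $u_0$. Given arbitrary $F,G,H \in \mathcal{W}(M,y)$, I choose linear functionals $\tilde F,\tilde G,\tilde H$ whose coefficients are precisely these three variational derivatives at $u_0$; their second variational derivatives vanish, in accordance with the reduced form of $J$. Hence $J(F,G,H)(u_0) = J(\tilde F,\tilde G,\tilde H)(u_0)$, which is zero by hypothesis. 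As $u_0$ was arbitrary, the Jacobi identity holds on all of $\mathcal{W}(M,y)$, and together with bilinearity (immediate from (\ref{poisson})) and the skew-symmetry of the second paragraph this shows the bracket is a WNL Poisson bracket.
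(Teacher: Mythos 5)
Your proposal follows essentially the same route as the paper: for each fixed $u_0$ you replace $F$ and $G$ by linear functionals of the form (\ref{linear}) whose coefficients are the (bounded) variational derivatives of $F$ and $G$ at $u_0$, which is exactly the paper's construction of $\tilde F$ and $\tilde G$, and skew-symmetry then follows identically. For the Jacobi identity the paper simply defers to Ferapontov, while your sketch --- the cancellation of all second-variational-derivative terms in the cyclic sum, forced by the skew-adjointness of $P^{ij}_\lambda(u_0)$, after which the Jacobiator depends only on first-order data and can be matched by linear functionals --- is precisely the standard argument behind that reference; note, however, that you assert the cancellation rather than carry out the required integrations by parts, so at that one point your write-up is no more complete than the paper's.
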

\begin{proof}
First of all, notice that applying the Euler-Lagrange formula to such an $F$ we get
\begin{equation*}
\dfrac{\delta F}{\delta u_i (x)} = \alpha_i(x)
\end{equation*}
Let $F, G \in \mathcal{W}(\Omega)$ and fix $w \in \mathcal{S}(\Omega)$.
If we define $\tilde{F}, \tilde{G} \in \mathcal{L}(\Omega)$ as
\begin{align*}
\tilde{F}(u) := \int_\mathbb{R} \dfrac{\delta F}{\delta w_i (x)} u^i(x) dx 
\quad \quad \quad 
\tilde{G}(u) := \int_\mathbb{R} \dfrac{\delta G}{\delta w_i (x)} u^i(x) dx
\end{align*}
we have that $\tilde{F}, \tilde{G}$ are of the form (\ref{linear}) and
\begin{align*}
\lbrace F , G \rbrace [w] &= \int_\mathbb{R} \dfrac{\delta F}{\delta w_i (x)} P^{i j}[w] \dfrac{\delta G}{\delta w_j (x)} dx\\
&= \int_\mathbb{R} \dfrac{\delta \tilde{F}}{\delta w_i(x)} P^{i j}[w] \dfrac{\delta \tilde{G}}{\delta w_j(x)} = \lbrace \tilde{F} , \tilde{G} \rbrace [w]
\end{align*}
the same argument holds for $\lbrace G , F \rbrace$, so
\begin{equation*}
\lbrace F , G \rbrace [w] = \lbrace \tilde{F} , \tilde{G} \rbrace [w] = -\lbrace \tilde{G} , \tilde{F} \rbrace [w] = -\lbrace G , F \rbrace [w]
\end{equation*}
Being $w$ arbitrary the thesis for the skew-symmetry follows.
For the Jacobi identity see (Ref.~\cite{Ferapontov}).
\end{proof}

We will use many times the following classical lemma, which we will state in a weak form.
\begin{lemma}[Variational Lemma]
Let $g \in C^{0}(\mathbb{R})$ and assume $\int_\mathbb{R} fg \, dx = 0$ for every $f \in C^{\infty}_0(\mathbb{R})$. Then $g = 0$.
\end{lemma}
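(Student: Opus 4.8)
The plan is to argue by contraposition: I will assume that $g$ is not identically zero and exhibit a test function $f \in C^\infty_0(\mathbb{R})$ for which $\int_\mathbb{R} fg\,dx \neq 0$, contradicting the hypothesis. This is the classical fundamental lemma of the calculus of variations, and the only real content is the localization of a sign of $g$ together with the existence of smooth bump functions.

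First I would fix a point where $g$ does not vanish. Suppose there exists $x_0 \in \mathbb{R}$ with $g(x_0) \neq 0$; replacing $g$ by $-g$ if necessary (which leaves the hypothesis unchanged, since the class of admissible $f$ is a linear space), I may assume $g(x_0) > 0$. By continuity of $g$ at $x_0$ there is a $\delta > 0$ such that $g(x) > g(x_0)/2 > 0$ for every $x$ in the interval $(x_0 - \delta, x_0 + \delta)$.

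Next I would produce a nonnegative test function concentrated on this interval. Concretely, take $f \in C^\infty_0(\mathbb{R})$ with $\mathrm{supp}(f) \subseteq (x_0 - \delta, x_0 + \delta)$, $f \geq 0$ everywhere, and $f(x_0) > 0$. The existence of such a function is the same standard fact already exploited in the construction of the compactly supported vector field in Lemma 1, obtained for instance by suitably rescaling and translating the prototype $t \mapsto \exp\!\bigl(-1/(1-t^2)\bigr)\,\mathbb{1}_{(-1,1)}(t)$.

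Finally I would read off the contradiction. Since $fg \geq 0$ on the whole line and $fg$ is continuous and strictly positive on a neighborhood of $x_0$ (where both $f$ and $g$ are positive), the integral $\int_\mathbb{R} fg\,dx = \int_{x_0-\delta}^{x_0+\delta} fg\,dx$ is strictly positive. This contradicts the assumption that $\int_\mathbb{R} fg\,dx = 0$ for every admissible $f$, so no such $x_0$ can exist and hence $g \equiv 0$. I do not expect any genuine obstacle here: beyond the continuity of $g$, the sole ingredient is the existence of smooth compactly supported bump functions, which is classical and has already been used earlier in the paper.
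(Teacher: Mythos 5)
Your proof is correct: it is the standard argument for the fundamental lemma of the calculus of variations (localize a sign of $g$ by continuity, integrate against a nonnegative bump supported in that neighborhood, and note the integrand $fg$ is continuous, nonnegative, and positive near $x_0$, so the integral is strictly positive). The paper itself gives no proof, stating the lemma as classical, so your argument simply supplies the canonical justification the paper takes for granted; there is no divergence to report.
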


The following result is an immediate application of what we have found above.
\begin{theorem}
A bracket of the form $(\ref{poisson})$ is skew-symmetric if and only if $g$ defines a pseudometric on M and the connection $\Gamma$ is compatible with $g$.
\end{theorem}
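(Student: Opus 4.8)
The plan is to reduce the whole statement, via Lemma~\ref{linearlemma}, to a pointwise differential-geometric identity and then read off the two conditions. First I would invoke the reduction contained in the skew-symmetry part of the proof of Lemma~\ref{linearlemma}: the bracket is skew-symmetric on all of $\mathcal{W}(\Omega)$ if and only if it is skew-symmetric on the linear functionals of the form (\ref{linear}). For such functionals the Euler--Lagrange formula (\ref{EL}) gives $\delta F/\delta u_i = \alpha_i$ and $\delta G/\delta u_j = \beta_j$ with $\alpha,\beta \in C_b^\infty(\mathbb{R})$ arbitrary, so $\lbrace F,G\rbrace[u] + \lbrace G,F\rbrace[u]$ becomes an explicit computation with the operator $P^{ij}$. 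It splits into three groups of terms according to the three summands of $P^{ij}$: the first-order (metric) term $g^{ij}\frac{d}{dx}$, the Christoffel term $-g^{is}\Gamma^j_{sk}u_x^k$, and the nonlocal term $w^i_k u_x^k\, d^{-1}(w^j_l u_x^l\,\cdot\,)$.

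The cleanest step, which I would carry out first, is to dispose of the nonlocal term. Setting $A := w^i_k(u)u_x^k\alpha_i$ and $B := w^j_l(u)u_x^l\beta_j$, both of which lie in $\mathcal{S}_1$ (a product of the Schwartz functions $u_x^k$ with the bounded smooth factors $w(u)$ and $\alpha,\beta$), the nonlocal contributions to $\lbrace F,G\rbrace$ and $\lbrace G,F\rbrace$ are $\int_\mathbb{R} A\,d^{-1}(B)\,dx$ and $\int_\mathbb{R} B\,d^{-1}(A)\,dx$. The second property of $d^{-1}$ (applied to $\xi = d^{-1}(A)d^{-1}(B)$, whose derivative is $d^{-1}(A)B + A\,d^{-1}(B)$ and whose limits at $\pm\infty$ coincide) gives at once $\int_\mathbb{R} A\,d^{-1}(B)\,dx = -\int_\mathbb{R} B\,d^{-1}(A)\,dx$. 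Hence the nonlocal parts cancel in the sum \emph{identically}, with no constraint on $w$; skew-symmetry is therefore a condition on the local part alone, exactly as in the Dubrovin--Novikov case.

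It remains to analyze the local part. I would integrate by parts to move all $x$-derivatives off $\alpha$, reducing $\lbrace F,G\rbrace[u] + \lbrace G,F\rbrace[u]$ to a single integral $-\int_\mathbb{R}\alpha_i\,C^i\,dx$ with $C^i$ linear in $\beta_j$ and $\beta_j'$ (the $u$-dependence entering through $g(u)$, $\Gamma(u)$ and $u_x$ via the chain rule $(g^{ij}(u))' = \partial_k g^{ij}\,u_x^k$). Since $\alpha$ is arbitrary, the Variational Lemma forces $C^i \equiv 0$ for every admissible $\beta$ and $u$. Now I would separate the coefficient of $\beta_j'$ from that of $\beta_j$: choosing $\beta$ with prescribed independent values of $\beta_j(x_0)$ and $\beta_j'(x_0)$, and letting $u$ range over $\mathcal{S}(\Omega)$ so that $u(x_0)$ sweeps all of $\Omega$ and $u_x(x_0)$ is an arbitrary vector, the two coefficients must vanish separately at every point of $\Omega$. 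The $\beta_j'$-coefficient yields $g^{ij} = g^{ji}$, i.e.\ (together with the standing nondegeneracy $g \in GL_n$) that $g$ is a pseudometric; and, using symmetry to kill the extra derivative terms, the $\beta_j u_x^k$-coefficient collapses to $\partial_k g^{ij} + g^{is}\Gamma^j_{sk} + g^{js}\Gamma^i_{sk} = 0$, which is precisely $\nabla_k g^{ij} = 0$, the compatibility of $\Gamma$ with $g$. Reading the computation backwards gives the converse.

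The most delicate part is this final extraction step: justifying that the vanishing of a single integral against arbitrary $\alpha$, $\beta$ and $u$ really forces the two pointwise tensorial identities. This rests on (i) the genuine independence of $\beta_j(x_0)$, $\beta_j'(x_0)$ and of $u(x_0)$, $u_x(x_0)$ --- available because $\Omega$ is open and Schwartz bump-type functions realize any prescribed $1$-jet at a point while staying $\Omega$-valued --- and (ii) correctly recognizing the surviving first-order expression as the covariant derivative $\nabla_k g^{ij}$ of the inverse metric rather than as an unstructured combination of Christoffel symbols. Everything else is bookkeeping with the Einstein convention and repeated integration by parts.
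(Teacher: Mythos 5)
Your proposal is correct and follows exactly the route the paper intends for this statement, which it leaves unproved as ``an immediate application of what we have found above'': reduction to the linear functionals (\ref{linear}) via Lemma \ref{linearlemma}, identical cancellation of the two nonlocal contributions by property (\ref{secondProperty}) of $d^{-1}$, and the classical Dubrovin--Novikov extraction of $g^{ij}=g^{ji}$ and $\nabla_k g^{ij}=0$ from the local part via integration by parts and the Variational Lemma. The only detail worth making explicit is that the integration by parts in the forward direction should be carried out for compactly supported test coefficients (as the paper itself does in the Jacobi computation), so that no boundary terms arise before the Variational Lemma is applied.
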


From now on we'll denote with $F$, $G$ and $H$ functionals of the form
\begin{align*}
&F(u) := \int_\mathbb{R} f_i(x)u^i(x) dx, \quad
G(u) := \int_\mathbb{R} g_j(x)u^j(x) dx, \quad\\
&H(u) := \int_\mathbb{R} h_l(x)u^l(x) dx
\end{align*}
where $f_i$, $g_j$ and $h_l$ belong to $C^\infty_b(\mathbb{R})$. Moreover, we will use the following notation:
\begin{align*}
\tilde{f} := d^{-1}\left( w_k^i u_x^k f_i\right), \quad
\tilde{g} := d^{-1}\left( w_k^j u_x^k g_j\right), \quad
\tilde{h} := d^{-1}\left( w_k^l u_x^k h_l\right)
\end{align*}
Thanks to our formula for the variational derivative of a $\tilde{\mathcal{W}}_n^1$ functional, a straightforward computation gives the following result for a skew-symmetric bracket of the form (\ref{poisson}):
\begin{align*}
\dfrac{\delta \lbrace F , G \rbrace}{\delta u^p} &= f^{\prime}_i g^{i s} \Gamma^j_{s p} g_j 
- f_i g^{s j} \Gamma^i_{s p} g^{\prime}_j + f_i u_x^k g_j R^{i j}_{p k}\\
& + f_i u_x^k g_j g^{s l} \left(\Gamma^i_{s p} \Gamma^j_{l k} - \Gamma^i_{s k} \Gamma^j_{l p} \right)
+ f_i u_x^k \left(w_k^i w_p^j - w_p^i w_k^j \right)g_j \\ 
&
+ \left[ f_i u_x^k \left( \dfrac{\partial w_k^i}{\partial u^p} - \dfrac{\partial w_p^i}{\partial u^k} \right) - f^{\prime}_i w_p^i \right] \tilde{g}\\
&- \left[ g_j u_x^k \left( \dfrac{\partial w_k^j}{\partial u^p} - \dfrac{\partial w_p^j}{\partial u^k} \right) - g^{\prime}_j w_p^j \right] \tilde{f}
\end{align*}
where $R^{i j}_{p k} := g^{i s} \left( \dfrac{\partial \Gamma^j_{s p}}{\partial u^k} - \dfrac{\partial \Gamma^j_{s k}}{\partial u^p} + \Gamma^l_{s p} \Gamma^j_{l k} - \Gamma^l_{s k} \Gamma^j_{l p} \right)$.\\
We are now ready to give a proof of Theorem \ref{TH}.
What remains to prove is that a skew-symmetric bracket satisfies the Jacobi identity iff
\begin{align}
\label{GPC:1}
&\Gamma_{s p}^{j} = \Gamma_{p s}^{j}\\
\label{GPC:2}
&R_{p k}^{i j} = w_{p}^{i} \, w_{k}^{j} - w_{k}^{i} \, w_{p}^{j}\\
\label{GPC:3}
&w_{p}^{i}\, g^{p l} = w_{p}^{l}\, g^{p i}\\
\label{GPC:4}
&\nabla_p w_k^{i} = \nabla_k w_p^{i}
\end{align}
where $R$ is the Riemann tensor of $(M,g)$ and $\nabla$ is its Levi Civita connection.
\begin{proof}
In this proof we will write $\partial_p$ instead of $\frac{\partial}{\partial u_p}$.
With the symbol $\circlearrowleft_{\alpha \beta \gamma}$ we'll denote the sum over the cycles of $S_3$ applied to $(\alpha, \beta, \gamma)$. So the Jacobi identity is written $\circlearrowleft_{F G H} \lbrace \lbrace F , G \rbrace , H \rbrace = 0$.
For brackets of the form (\ref{poisson}) this identity translates to
\begin{align*}
&\int_\mathbb{R} \circlearrowleft_{F G H} \left[ 
\dfrac{\delta \lbrace F , G \rbrace}{\delta u^p} \left(
g^{p l}h^{\prime}_l - g^{p s} \Gamma^l_{s k} u_x^k h_l + w_k^p u_x^k \tilde{h} \right) \right] dx =0
\end{align*}
Thanks to the previous calculation it's easy to compute the integrand above, which is
\begin{align*}
&- f_i \, g_j \, h_l u_x^{k}
a^{i j l}_k +
f^{\prime}_i g_j h_l u_x^k b^{i j l}_k
+
f_i g^{\prime}_j h_l u_x^k b^{j l i}_k
+
f_i \, g_j \, h_l^{\prime} u_x^{k} b^{l i j}_k\\
&+
\tilde{f} \, g_j \, h_l \, u_x^k c^{j l}_k 
+
f_i \, \tilde{g} \, h_l u_x^k c^{l i}_k 
+
f_i \, g_j \, \tilde{h} \, u_x^k c^{i j}_k\\
&+
f_i^{\prime} \, g_j^{\prime} \, h_l d^{i j l}
+
f_i^{\prime} \, g_j \, h_l^{\prime} d^{l i j}
+ 
f_i \, g_j^{\prime} \, h_l^{\prime} d^{j l i}\\
&+
f_i^{\prime} \, \tilde{g} \, h_l u_x^k e^{i l}_k
-
f_i^{\prime} \, g_j \, \tilde{h} u_x^k e^{i j}_k 
-
\tilde{f} \, g_j^{\prime} \, h_l u_x^k e^{j l}_k\\
&+
f_i \, g_j^{\prime} \, \tilde{h} u_x^k e^{j i}_k
+
\tilde{f} \, g_j \, h_l^{\prime} u_x^k e^{l j}_k
-
f_i \, \tilde{g} \, h_l^{\prime} u_x^k e^{l i}_k\\
&+
f_i^{\prime} \, g_j^{\prime} \, \tilde{h} m^{i j} 
+
f_i^{\prime} \, \tilde{g} \, h_l^{\prime}m^{l i}
+
\tilde{f} \, g_j^{\prime} \, h_l^{\prime} m^{j l}\\
\end{align*}
where
\begin{align*}
a^{i j l}_k &:= \left[\circlearrowleft_{ijl} \, g^{v s} \left( \Gamma_{v p}^{i}\, \Gamma_{s k}^{j} -  \Gamma_{v k}^{i}\, \Gamma_{s p}^{j}\right)\right. \\
& \left. + R_{p k}^{i j} - w_{p}^{i} \,w_{k}^{j} + w_{k}^{i}\, w_{p}^{j}\right] \,g^{p \alpha}\,\Gamma_{\alpha \beta}^{l}\,u_x^{\beta} \\
b^{i j l}_k &:= g^{i s} \, \Gamma_{s p}^{j} \,g^{p v}\,\Gamma_{v k}^{l} - g^{i s} \, \Gamma_{s p}^{l} \,g^{p v}\,\Gamma_{v k}^{j}\\
& + \left[ g^{v s} \left( \Gamma_{v p}^{j}\, \Gamma_{s k}^{l} -  \Gamma_{v k}^{j}\, \Gamma_{s p}^{l}\right) + R_{p k}^{j l} - w_{p}^{j} \,w_{k}^{l} + w_{k}^{j}\, w_{p}^{l}\right] \,g^{p i}\\
c^{j l}_k &:= \left(\partial_{p} w_\beta^{j} - \partial_\beta w_p^{j}\right)u_x^{\beta} \,g^{p \alpha}\,\Gamma_{\alpha k}^{l}\\
& + \left[ g^{v s} \left( \Gamma_{v p}^{j}\, \Gamma_{s \beta}^{l} -  \Gamma_{v \beta}^{j}\, \Gamma_{s p}^{l}\right) + R_{p \beta}^{j l} - w_{p}^{j} \,w_{\beta}^{l} + w_{\beta}^{j}\, w_{p}^{l}\right]u_x^{\beta}\,w_{k}^{p} \\
&- \left(\partial_{p} w_\beta^{l} - \partial_{\beta} w_p^{l}\right)u_x^{\beta} \,g^{p \alpha}\,\Gamma_{\alpha k}^{j}\\ 
d^{i j l} &:= g^{j s} \, \Gamma_{s p}^{l} \,g^{p i} - g^{i s} \, \Gamma_{s p}^{l} \,g^{p j}\\
e^{i j}_k &:= w_p^{i} \,g^{p \alpha}\,\Gamma_{\alpha k}^{j}- \left(\partial_{p} w_k^{j} - \partial_{k} w_p^{j}\right)\,g^{p i} - g^{i s} \, \Gamma_{s p}^{j}\,w_{k}^{p}\\
m^{i j} &:= w_p^{i} \,g^{p j} - w_p^{j} \,g^{p i}
\end{align*}
In the whole computation we have omitted the evaluation in $x$ and $u(x)$. Then the proof follows from the following two claims:
\begin{enumerate}
\item \emph{For a skew symmetric bracket of the form we consider, the Jacobi identity holds iff}
\begin{align}
\label{Jacobi}
b^{i j l}_k (z) = d^{i j l}(z) = e^{i j}_k (z) = m^{i j}(z) = 0 
\end{align}
\emph{for each $i, j, l, k \in \lbrace 1, ..., n\rbrace$ and for each $z \in \Omega$.}
\item \emph{The system (\ref{Jacobi}) is equivalent to the system (\ref{GPC:1}), (\ref{GPC:2}), (\ref{GPC:3}), (\ref{GPC:4})}.
\end{enumerate}
Let's start from the second one:
($\Rightarrow$) Using the symmetry of $g$ and renaming two indices we can write $0 = d^{i j l} = g^{i s} \left(\Gamma^l_{p s} - \Gamma^l_{s p}\right)g^{p j}$. In matricial form, defined $A(l) := \left(\Gamma^l_{p s} - \Gamma^l_{s p}\right)_{p,s = 1, ..., n} \in \mathbb{R}^{n \times n}$, this means $g A(l) g = 0$. By non degeneracy of the pseudometric $g$ it follows $A(l) = 0$ for each $l$, which is (\ref{GPC:1}). Now consider $b^{i j l}_k$; we can write it as
\begin{align*}
&g^{i s} \, g^{p v} \left( \Gamma_{s p}^{j} \,\Gamma_{v k}^{l} - \Gamma_{s p}^{l} \, \Gamma_{v k}^{j} \right) +  g^{v s} g^{p i} \left( \Gamma_{v p}^{j}\, \Gamma_{s k}^{l} -  \Gamma_{v k}^{j}\, \Gamma_{s p}^{l}\right)\\ & + g^{p i}\left(R_{p k}^{j l} - w_{p}^{j} \,w_{k}^{l} + w_{k}^{j}\, w_{p}^{l}\right)
\end{align*}
Consider the first two summands: renaming the indices so that $g^{i s} \, g^{p v}$ is a common factor we get that, using (\ref{GPC:1}), their sum is equal to
\begin{align*}
&g^{i s} \, g^{p v} \left(\Gamma_{v s}^{j}\, \Gamma_{p k}^{l}  - \Gamma_{s p}^{l} \, \Gamma_{v k}^{j}\right) = g^{i s} \, g^{p v} \, \Gamma_{v s}^{j}\, \Gamma_{p k}^{l} - g^{i s} \, g^{p v} \, \Gamma_{s p}^{l} \, \Gamma_{v k}^{j} \\
= &g^{i s} \, g^{p v} \, \Gamma_{v s}^{j}\, \Gamma_{p k}^{l} - g^{i s} \, g^{v p} \, \Gamma_{s v}^{l} \, \Gamma_{v p}^{j} = g^{i s} \, \Gamma_{s v}^{l} \, \Gamma_{v p}^{j} \left( g^{p v} - g^{v p} \right) = 0
\end{align*}
So $b^{i j l}_k = 0$ gives (\ref{GPC:2}) by the usual non degeneracy of $g$. Trivially $m^{i j} = 0$ is (\ref{GPC:3}). Using this last equation and renaming a couple of indices we get that
\begin{align*}
&0 = e^{i j}_k = g^{i p}\left(\partial_{p} w_k^{j} - \partial_{k} w_p^{j} + \Gamma_{p s}^{j}\, w_{k}^{s} - \Gamma_{s k}^{j} \,w_p^{s} \right) 
\\= &
g^{i p}\left[\left(\partial_{p} w_k^{j} + \Gamma_{p s}^{j}\, w_{k}^{s} - \Gamma_{k p}^{s} \,w_s^{j}\right) - \left(\partial_{k} w_p^{j}  + \Gamma_{s k}^{j} \,w_p^{s} - \Gamma_{p k}^{s} \,w_s^{j} \right)\right]
\\= &
g^{i p}\left(\nabla_p w_k^{j} - \nabla_k w_p^{j}\right)
\end{align*} 
which by non degeneracy of $g$ is (\ref{GPC:4}).\\
($\Leftarrow$ ) Is clear by looking at the definitions of $b$, $d$, $e$ and $m$.\\
Now let's consider our first claim. ($\Leftarrow$) This is the easiest implication of the two. It's just a matter of checking that $a^{i j l}_k$ and $c^{i j}_k$ are equal to zero for each $u$. But this consists in doing computations completely analogous to the ones above, so we omit them.\\
($\Rightarrow$)
If the Jacobi identity holds then the integral over $\mathbb{R}$ of the function in the previous page has to vanish for any choice of $f, g, h \in C^\infty_b(\mathbb{R}, \mathbb{R}^n)$ and $u \in \mathcal{S}(\Omega)$.
First of all, let's fix $i, j , l$ and consider $f, g, h$ having only one non zero component, respectively the $i$-th, $j$-th and $l$-th. So, we can erase the sum on those indices in the computation. In this part of the proof we assume that at least one of the functions $w^\alpha_\beta$ is non zero; namely, there exists $\bar{l},\bar{k} \in \lbrace 1, ..., n \rbrace$ and $\bar{z} \in \Omega$ such that $w^{\bar{l}}_{\bar{k}}(\bar{z}) \neq 0$.
The case where the $w$ are all zero is completely analogous, but simpler.
We can regroup the terms and write the Jacobi identity in the following form:
\begin{align*}
\int_\mathbb{R} (f\alpha + f^{\prime}\beta + \tilde{f}\gamma) dx = 0 \quad \quad \forall f \in C^\infty_b(\mathbb{R})
\end{align*} 
Restricting to functions $f \in C^\infty_0(\mathbb{R})$ we can integrate by parts getting
\begin{align*}
\int_\mathbb{R} f \left( \alpha - \beta^{\prime} - w^i_s u_x^s d^{-1}(\gamma)\right) dx = 0 \quad \quad \forall f \in C^\infty_0(\mathbb{R})
\end{align*} 
\begin{remark}
In the integration by parts of the third term we can neglect the boundary term by the property (\ref{secondProperty}) of the operator $d^{-1}$.
\end{remark}
Now we apply the variational lemma:
\begin{align}
\label{eq:3}
\alpha - \beta^\prime - w^i_s u_x^s d^{-1} (\gamma) = 0 \qquad \forall g,h \in C_b^\infty(\mathbb{R}) \quad \forall u \in \mathcal{S}(\Omega)
\end{align}
Explicitly this equation is
\begin{align*}
&- g \, h u_x^{k} a^{i j l}_k
- \left( g h u_x^k b^{i j l}_k \right)^\prime
+
g^{\prime}_j h u_x^k b^{j l i}_k
+
g \, h^{\prime} u_x^{k} b^{l i j}_k\\
&-
w^i_s u^s_x d^{-1}\left(g \, h \, u_x^k c^{j l}_k\right) 
+
\tilde{g} \, h u_x^k c^{l i}_k 
+
g \, \tilde{h} \, u_x^k c^{i j}_k\\
&-
\left( g^{\prime} \, h d^{i j l} \right)^\prime
-
\left(g \, h^{\prime} d^{l i j}\right)^\prime
+ 
g^{\prime} \, h^{\prime} d^{j l i}\\
&-
\left(\tilde{g} \, h u_x^k e^{i l}_k\right)^\prime
+
\left(g \, \tilde{h} u_x^k e^{i j}_k)\right)^\prime
+
w^i_s u^s_x d^{-1}\left( g^{\prime} \, h u_x^k e^{j l}_k\right)\\
&+
g^{\prime} \, \tilde{h} u_x^k e^{j i}_k
-
w^i_s u^s_x d^{-1}\left( g \, h^{\prime} u_x^k e^{l j}_k\right)
-
\tilde{g} \, h^{\prime} u_x^k e^{l i}_k\\
&-
\left( g^{\prime} \, \tilde{h} m^{i j} \right)^\prime
-
\left(  \tilde{g} \, h^{\prime}m^{l i} \right)^\prime
-
w^i_s u^s_x d^{-1}\left( g^{\prime} \, h^{\prime} m^{j l} \right) = 0
\end{align*}
\begin{remark}
In this part of the proof we will use the arbitrariness of $u \in \mathcal{S}(\Omega)$, $g, h \in C^\infty_b(\mathbb{R})$ to choose particular functions that, plugged into (\ref{eq:3}), give us relations that will imply the claim. In particular, we will use that given any bounded subset $B \subset \mathbb{R}$ and given any smooth function $f:B \rightarrow \mathbb{R}$ there exists $f^\star \in \mathcal{S}_1 \subset C^\infty_b(\mathbb{R})$ such that $f^\star_{\vert B} = f$. This follows by the existence of bump functions.
\end{remark}  From now on we consider a point $z$ on our $\Omega$. Consider $u \in \mathcal{S}(\Omega)$ such that
\begin{align*}
&u(0) = z \quad , \quad u_x^s(0) = 0 \quad , \quad u_{x x}^s(0) = 0\\
&u(1) = \bar{z} \quad , \quad u_x^s(1) = \delta^{s \bar{k}} \quad \quad \forall s \in \lbrace 1, ..., n \rbrace
\end{align*}
Plugging this $u$ in (\ref{eq:3}) and evaluating at $x=0$ we get
\begin{align*}
&g^{\prime \prime}\tilde{h} m^{i j} + g^\prime h m^{i j} + g h^\prime m^{l i} + \tilde{g}h^{\prime \prime} m^{l i} + g^{\prime \prime} h d^{i j l} + g^{\prime} h^\prime d^{i j l}\\ &+ g^{\prime} h^\prime d^{l i j} + g h^{\prime \prime} d^{l i j} - g^{\prime} h^\prime d^{j l i} = 0
\end{align*}
Now we choose $g$ such that $g(0) = g^\prime(0) = 0$ and $g^{\prime \prime}(0) \neq 0$. Then we have
\begin{align}\label{lasteq}
&g^{\prime \prime}\tilde{h} m^{i j} + \tilde{g}h^{\prime \prime} m^{l i} + g^{\prime \prime} h d^{i j l} = 0
\end{align}
What we have written until now holds for any choice of indices $i, j, l$. Now choose $l := \bar{l}$. Then we can construct $h$ such that $h(0) = h^\prime(0) = h^{\prime \prime}(0) = 0$ and $\tilde{h}(0) \neq 0$ through the use of bump functions. For this choice of $h$ our equation becomes $m^{i j}(z) = 0$ for each $i, j$. Then the equation (\ref{lasteq}) implies $d^{i j l} = 0$ for every $i, j, l$.
Now the equation (\ref{eq:3}) is considerably simplified. Let's fix $k$ and consider another $u \in \mathcal{S}(\Omega)$ such that
\begin{align*}
&u(0) = z \quad , \quad u_x^s(0) = 0 \quad , \quad u_{x x}^s(0) = \delta^{s k}\\
&u(1) = \bar{z} \quad , \quad u_x^s(1) = \delta^{s \bar{k}} \quad \quad \forall s \in \lbrace 1, ..., n \rbrace
\end{align*}
The equation becomes
\begin{align*}
& g \, h \, b^{i j l}_k
+
\tilde{g} \, h \, e^{i l}_k
-
g \, \tilde{h} \, e^{i j}_k = 0
\end{align*}
Fixing $l := \bar{l}$ we can choose $h$ as above, so we get $e^{i j}_k(z) = 0$ for each $i, j, k$ and hence $b^{i j l}_k = 0$ for every $i, j, l, k$.
\end{proof}

Let's now apply our theorem to see that a certain WNL-bracket is actually a WNL-Poisson bracket.
\begin{example}
Consider the bracket over $\mathbb{R}^2$ defined through (\ref{poisson}) by the following Riemannian objects:
\begin{itemize}
\item $g$ is diagonal defined by
\begin{align*}
	g^{1 1}(u,v) := - \alpha(u)(u-v)^2 \quad \quad\quad g^{2 2}(u,v) := \beta(v)(u-v)^2
\end{align*} 
where $\alpha(u) := c_1 + k + c_2u + c_3 u^2$ and $\beta(v) := c_1 + c_2v + c_3 v^2$.
\item $\Gamma_{i j}^k$ are defined as the Christoffel symbols of the Levi-Civita connection of $(\mathbb{R}^2,g)$, which are
\begin{align*}
	\Gamma_{11}^1 = \frac{1}{v-u} - \frac{\alpha^\prime(u)}{2\alpha(u)} \quad\quad&\quad\quad \Gamma_{22}^2 = \frac{1}{u-v} - \frac{\beta^\prime(v)}{2\beta(v)} \\
	\Gamma_{21}^1 = \Gamma_{12}^1 = \frac{1}{u-v}  \quad\quad&\quad\quad \Gamma_{21}^2 = \Gamma_{12}^2 = \frac{1}{v-u}\\
	\Gamma_{11}^2 = \frac{\beta(v)}{\alpha(u)(u-v)} \quad\quad & \quad\quad \Gamma_{22}^1 = \frac{\alpha(u)}{\beta(v)(u-v)}
\end{align*}
\item $w^i_j(u,v) := \delta^i_j \sqrt{k}$
\end{itemize}
where $k \in \mathbb{R}^+$ is fixed. In order to say that they define a WNL-Poisson bracket we have to check that these objects satisfy the Gauss and Peterson-Codazzi-Mainardi equations.
Clearly both terms in (\ref{GPC:4}) vanish being $w$ constant, so this equation is satisfied. To see that (\ref{GPC:3}) holds true, just compute:
\begin{align*}
	w^i_p g^{p l} = \sqrt{k} \delta^i_p g^{p l} = \sqrt{k} g^{i l} = \sqrt{k} g^{l i} = \sqrt{k} \delta^l_p g^{p i} = w^l_p g^{p i}
\end{align*}
For (\ref{GPC:2}), after some computations, we see that
\begin{align*}
	R^{i j}_{p k} = K \left( \delta^i_p \delta^j_k - \delta^i_k \delta^j_p \right) = w_{p}^{i}  w_{k}^{j} - w_{k}^{i}  w_{p}^{j}
\end{align*}
This shows that these are actually WNL-Poisson brackets over $\mathbb{R}^2$. The hypersurface of $\mathbb{R}^3$ associated to this bracket is a hypersurface of positive constant curvature $k$. These brackets are useful to give a Hamiltonian structure to the Chaplygin gas equations \cite{Mokhov}.
\end{example}
\subsection*{Acknowledgments}
I would like to show my deepest gratitude to Professor Paolo Lorenzoni for the constant support he provided throughout the writing of this paper.
\bibliographystyle{plain}
\bibliography{Paper}
\end{document}